  \newtheorem{thm}{Theorem}[section]
  \newtheorem{lem}[thm]{Lemma}
  \newtheorem{cor}[thm]{Corollary}
  \theoremstyle{definition}
  \newcommand{\tr}{\operatorname{tr}}
  \newcommand*{\cX}{\mathcal{X}}
  \newcommand*{\cW}{\mathcal{W}}
  \newcommand{\cS}{\mathcal{S}}
  \newcommand*{\cH}{\mathcal{H}}
  \newcommand*{\cY}{\mathcal{Y}}
  \newcommand{\ket}[1]{|#1\rangle}
  \newcommand{\bra}[1]{\langle#1|}
  \newcommand{\proj}[1]{|#1\rangle\langle#1|}
  \newcommand*{\Vex}{\mathsf{Vex}}
  \newcommand*{\cN}{\mathcal{N}}
  \newcommand*{\cE}{\mathcal{E}}
  \newcommand*{\cF}{\mathcal{F}}
  \newcommand*{\cB}{\mathcal{B}}
  \newcommand*{\cD}{\mathcal{D}}
  \newcommand*{\cP}{\mathcal{P}}
  \newcommand*{\cG}{\mathcal{G}}
  \DeclareMathOperator{\argmin}{\arg\min}
  \DeclareMathOperator{\argmax}{\arg\max}
  \algrenewcommand\alglinenumber[1]{ }
  \algrenewcommand\algorithmicrequire{\textbf{Precondition:}}
  \algrenewcommand\algorithmicensure{\textbf{Postcondition:}}
\newcommand*{\selfadjointops}{\mathcal{B}_{\mathsf{sa}}}
\begin{document}

  \title{\LARGE Jointly constrained semidefinite bilinear programming \\
  with an application to Dobrushin curves}
  
  \author{Stefan Huber, Robert K\"{o}nig and Marco Tomamichel}
  \maketitle
  
  \begin{abstract}
  We propose a branch-and-bound algorithm for minimizing a 
  bilinear functional of the form 
  \[
  f(X,Y) = \tr((X\otimes Y)Q)+\tr(AX)+\tr(BY) ,
  \]
  of pairs of Hermitian matrices~$(X,Y)$ restricted by joint semidefinite programming constraints. The functional is  parametrized by self-adjoint matrices $Q$, $A$ and $B$.
  This problem generalizes that of a bilinear program, where $X$ and $Y$ belong to  polyhedra. The algorithm converges to a global optimum and yields upper and lower bounds on its value in every step.
  Various problems in quantum information theory can be expressed in this form.
  As an example application, we compute Dobrushin curves of quantum channels, giving  upper bounds on classical coding with energy constraints.
  \end{abstract}

  \section{Introduction}
  The bilinear program of the form
  \begin{align}
  \min_{(x,y)\in X\times Y} x^TQy+v^Tx+w^Ty\label{eq:basicbilinearprogram} ,
  \end{align}
  where $X\subset \mathbb{R}^p$ and $Y\subset\mathbb{R}^q$ are polyhedra and $v \in \mathbb{R}^p$, $w \in \mathbb{R}^q$, $Q \in \mathbb{R}^{p \times q}$, is among the 
  most well-studied optimization problems. 
  One of its first appearances is in
  the formulation of certain two nonzero-sum games studied by 
  Nash~\cite{nash51}.  The optimization problem~\eqref{eq:basicbilinearprogram} has
  various applications in operations research and information theory, including network flow problems, dynamic Markovian assignment problems,
  and dynamic production problems\,---\,see~\cite{konno71} and~\cite{Floudas1995} for a discussion of a number of these. Several natural generalizations of the problem~\eqref{eq:basicbilinearprogram} exist. In particular,  a {\em biconvex problem} is of the form $\min_{(x,y)\in \cS}f(x,y)$ where~$\cS$ and $f$ are biconvex, i.e., $\cS_{x_0}=\{y\in\mathbb{R}^q\ |\ (x_0,y)\in\cS\}$ and $f(x_0,\cdot)$ are convex for every $x_0\in\mathbb{R}^p$, and similarly for $y_0\in\mathbb{R}^q$. 
  We refer to~\cite{Gorski2007} for a review of biconvex problems (see also~\cite{FaizAlKhayyal92}).

  Here we consider a different generalization of~\eqref{eq:basicbilinearprogram} which pertains to problems in quantum information theory. We refer to it as {\it jointly constrained semidefinite bilinear programming}. In this generalization, the vectors $x\in\mathbb{R}^p$ and $y\in\mathbb{R}^q$ are replaced by self-adjoint operators $X\in\selfadjointops(\mathbb{C}^p)$ and $Y\in\selfadjointops(\mathbb{C}^q)$ satisfying certain semidefinite programming (SDP) constraints. The bilinear form of the objective function is retained, leading to 
  \begin{align}
  \min_{(X,Y)\in\cS} \tr((X\otimes Y)Q)+\tr(AX)+\tr(BY)\ .\label{eq:mainoptimizationproblem}
  \end{align}
  The problem~\eqref{eq:mainoptimizationproblem} is thus fully specified by
  a subset $\cS\subset \selfadjointops(\mathbb{C}^p)\times\selfadjointops(\mathbb{C}^q)$ 
  of pairs $(X,Y)$ defined by a family of SDP constraints,
  as well as self-adjoint operators~$Q\in\selfadjointops(\mathbb{C}^p\otimes\mathbb{C}^q)$, $A\in\selfadjointops(\mathbb{C}^p)$
  and $B\in\selfadjointops(\mathbb{C}^q)$.

  The problem~\eqref{eq:mainoptimizationproblem} appears in various forms
  throughout quantum information theory \footnote{This problem is not to be confused with the recently introduced quantum bilinear programs~\cite{Berta_2016}.}.
  For example, in entanglement distribution and quantum communication, one seeks to
  generate entanglement between a reference system~$R$
  and a system~$S$ transmitted through a noisy channel modeled by a completely positive trace-preserving (CPTP) map~$\cN:\cB(\cH_A)\rightarrow\cB(\cH_B)$. 
  A key figure of merit in this context is the entanglement fidelity~\cite{schumacher96}
  \begin{align}
  \max_{(\cE,\cD)}\bra{\Psi}((\cD\circ \cN\circ\cE\otimes\mathsf{id}_R)(\proj{\Psi}))\ket{\Psi}\label{eq:entanglementfidelity}
  \end{align}
  where the optimization is over all encoding CPTP maps $\cE:\cB(\cH_S)\rightarrow\cB(\cH_A)$
  and decoding CPTP maps~$\cD:\cB(\cH_B)\rightarrow\cB(\cH_S)$, and where $\ket{\Psi}\in\cH_S\otimes\cH_R$ is a fixed (maximally entangled) state of the joint system~$SR$. 
  Eq.~\eqref{eq:entanglementfidelity} can be cast in the form~\eqref{eq:mainoptimizationproblem}:
  The set of CPTP  maps $\cE:\cB(\cH_S)\rightarrow\cB(\cH_A)$ can be described by SDP constraints via the Choi-Jamiolkowski isomorphism (and analogously for~$\cD$), and the objective function is bilinear in~$(\cE,\cD)$.

  Another context in which the problem~\eqref{eq:mainoptimizationproblem} appears naturally is the setting of Bell inequalities and quantum games. Consider for example the bipartite case, where a state $\ket{\Psi}\in\cH_A\otimes\cH_B$ is given. A Bell inequality can be expressed as a lower bound on 
  the expectation value $\bra{\Psi}B\ket{\Psi}$ of a Bell operator
  $B=B(\{E_j\}_{j},\{F_k\}_{k})\in\selfadjointops(\cH_A\otimes\cH_B)$ which depends on
  observables
  $\{E_j\}_{j}$ on $\cH_A$ (corresponding to $A$'s measurement settings)
  and observables $\{F_k\}_{k}$ on $\cH_B$ (corresponding to $B$'s measurement settings). 
   The Bell operator usually depends bilinearly on $(\{E_j\}_{j},\{F_k\}_{k})$: for example, the Bell-CHSH operator involves two measurement settings each and is given by $B(E_0,E_1,F_0,F_1)~=~E_0\otimes (F_0+F_1)+E_1\otimes (F_0-F_1)$. 
  Since an observable $A$ is a self-adjoint operator satisfying $-I\leq A\leq I$, the problem of finding the optimal value 
  \begin{align}
  \max_{\{E_j\}_{j},\{F_k\}_{k}} \bra{\Psi}B(\{E_j\}_{j},\{F_k\}_{k})\ket{\Psi}\ \label{eq:bellinequalityoptimization}
  \end{align}
  optimized over all observables can  directly be cast in the form~\eqref{eq:mainoptimizationproblem}. 

  We note that the form of~\eqref{eq:mainoptimizationproblem} is 
  somewhat more general than what is required in most applications to quantum information theory such as problems~\eqref{eq:entanglementfidelity} and~\eqref{eq:bellinequalityoptimization}. Indeed, in the latter two problems,
  there are no joint constraints (i.e., the set $\cS=\cS_1\times\cS_2$ is a product of two sets, each of which is defined by SDP constraints), and the objective function has no linear terms. In Section~\ref{sec:dobrushin}, we discuss a problem from quantum information theory whose reformulation in terms of~\eqref{eq:mainoptimizationproblem} involves linear terms. 
  
  A useful alternative but equivalent form of the optimization problem is given by
 \begin{align}
   \min_{(X,Y)\in\cS} \tr(X \mathcal{E}(Y)) + \tr (A X ) + \tr( B Y) \label{eq:bilinearprogram2}
 \end{align} 
 where $\mathcal{E}$ is a Hermiticity-preserving operation. The one-to-one correspondence of $\mathcal{E}$ and $Q$ is a consequence of the Choi-Jamiolkowski isomorphism. 
 
  Finally we note that joint constraints allow to consider quadratic optimization problems of the form
  \begin{align}
  \min_{X\in\cS} \tr((X\otimes X)Q)+\tr(AX)\ ,\label{eq:quadraticprogramdef}
  \end{align}
  where $\cS$ is a set defined by SDP constraints, simply by imposing that $X=Y$.
  A basic example is a situation  where  one is interested in the maximal (or minimal) uncertainty
   when measuring a state~$\ket{\Psi}$, using an observable~$X\in\cS$ belonging to a set~$\cS$ specified by SDP constraints. If uncertainty is quantified by the variance~$\mathsf{Var}_\Psi(X):=\bra{\Psi}X^2\ket{\Psi}-\bra{\Psi}X\ket{\Psi}^2$, then the problem~$\min_{X\in\cS}\mathsf{Var}_{\Psi}(X)$ can be recast in the form~\eqref{eq:quadraticprogramdef}.

  \subsection*{The seesaw algorithm in quantum information}
  Given the ubiquity of optimization problems of the form~\eqref{eq:mainoptimizationproblem} in quantum information theory, it is natural to seek algorithms computing its value as well as optimal solutions~$(X,Y)$. A widely used
  and often successful heuristic is referred to as the seesaw or mountain climbing algorithm. It is based on the observation that for every $X_0$  (associated with a feasible point $(X_0,Y_0)\in\cS$), the function 
  $f_{X_0}(Y)=\tr((X_0\otimes Y)Q)+\tr(AX_0)+\tr(BY)$ is
  linear up to the additive constant~$\tr(AX_0)$. Furthermore, the set $\cS_{X_0}=\{Y\in\cB(\mathbb{C}^q)\ |\ (X_0,Y)\in\cS\}$
  can be described by SDP constraints (indeed, we can augment those specifying $\cS$ by  the constraint $X=X_0)$.  Thus,
  $Y_0:=\argmin_{Y\in \cS_{X_0}}f_{X_0}(Y)$ can be found by solving an SDP. The role of $X$ and $Y$ is then interchanged: in a next step, $X_1:=\argmin_{Y\in \cS_{Y_0}}f_{Y_0}(X)$ 
  is computed (where $\cS_{Y_0}$ and $f_{Y_0}$ are defined analogously). Iterating this produces a sequence of pairs $(X_j,Y_j)$. It can be shown that in a finite number of iterations, this sequence converges to a Kuhn-Tucker point~$(\bar{X},\bar{Y})$  of the objective function 
  \begin{align}
  f(X,Y)=\tr((X\otimes Y)Q)+\tr(VX)+\tr(WY)\ ,
  \end{align}
  see~\cite[Prop.~2.3]{Konno1976} for an analysis of the analogous algorithm for bilinear programs, and~\cite[Theorem~5]{Nahapetyan2009}
   as well as~\cite{horst2000introduction} for conditions guaranteeing that this point is a local optimimum. Thus, while 
   it is not generally the case that $(\bar{X},\bar{Y})$ is a local (let alone global) optimum, this algorithm may\,---\,for a suitable choice of initial points $(X_0,Y_0)$ indeed result in a global solution. It should be emphasized, however, that even in that  case, global optimality needs to be established by other means.
   
   Despite these limitations, the seesaw algorithm has been quite popular and has been successfully applied in quantum information theory. Its use in the context of Bell inequalities was
  first discussed in~\cite[Section 5.1]{wernerwolf01}.
  In the context of error correction, the maximization of fidelity optimized over encoder and decoder has been investigated numerically using the seesaw algorithm, see~\cite{reimpellwerner05} and~\cite{KosutLidar2009}.
  More recently, a variant of the seesaw algorithm (involving a trilinear function) was used in~\cite{VertesiBrunner} to optimize  the value of a Bell inequality over PPT-states, yielding a counterexample to Peres' conjecture~\cite{peres96} that non-distillable states are local.

   To date, the seesaw algorithm appears to be the only procedure for optimization problems of the form~\eqref{eq:mainoptimizationproblem} which has been used in the context of quantum information. This is in sharp contrast to 
  the bilinear program~\eqref{eq:basicbilinearprogram}, for which a variety of algorithms have been proposed. This includes   cutting plane algorithms~\cite{Konno1976,vaishshettycuttingplane,Sherali1980}, branch-and-bound algorithms~\cite{Falk1973,kahfalk83}, extreme point ranking procedures~\cite{cabotfrancis} and methods based on polyhedral annexation~\cite{vaishshettypolytope} (see~\cite{Floudas1995} for a review). 

   \subsection*{Our contribution}
  Our main contribution is a branch-and-bound algorithm for the jointly constrained semidefinite bilinear program~\eqref{eq:mainoptimizationproblem}. It is a generalization of the branch-and-bound algorithm  by Al-Khayyal and Falk~\cite{kahfalk83} which we review in Section~\ref{sec:jointlyconstrainedbiconvex}. Roughly, our algorithm proceeds by iteratively solving semidefinite programs providing upper and lower bounds on
  the value of~\eqref{eq:mainoptimizationproblem}. 
  Following standard arguments (see e.g.,~\cite{kahfalk83}), it can be shown to produce a sequence of feasible points $(X_i,Y_i)\in\cS$ such that $f(X_i,Y_i)$  converges  to the global optimum~\eqref{eq:mainoptimizationproblem}. More importantly, it provides\,---\,at each stage~$i$\,---\,a
   bound on the deviation of $f(X_i,Y_i)$ from the optimum~\eqref{eq:mainoptimizationproblem}.

  To illustrate the practical use of our algorithm, we  apply it to a problem in quantum information theory: we  compute so-called Dobrushin curves for quantum channels. These give upper bounds on optimal codes for classical information in a scenario where the noise acts repeatedly.

  \subsubsection*{Outline of the paper}
  In Section~\ref{sec:bandb}, we briefly review branch-and-bound algorithms and discuss the algorithm by Al-Khayyal and Falk~\cite{kahfalk83} for solving jointly constrained biconvex programs. In Section~\ref{sec:mainalgorithm}, we give our algorithm for jointly constrained semidefinite bilinear programs. Finally, in Section~\ref{sec:dobrushin}, we discuss the application to Dobrushin curves.

  \section{Branch-and-bound algorithms\label{sec:bandb}}
  In this section, we  review the branch-and-bound algorithm of Al-Khayyal and Falk~\cite{kahfalk83} to solve  jointly constrained biconvex programs. We introduce
  the jointly constrained biconvex programming problem, and then give a  description of the algorithm of~\cite{kahfalk83}.
   
  \subsection{Jointly constrained biconvex programming\label{sec:jointlyconstrainedbiconvex}}
  
  To define jointly constrained biconvex programs, let $\cS\subset \mathbb{R}^n\times\mathbb{R}^n$ be a non-empty, closed and convex set. For later convenience, also let  $\cD~=~\Omega(\ell,L,m,M)~\subset~\mathbb{R}^n \times \mathbb{R}^n$ be the (product of) hyperrectangle(s) defined in terms of the vectors~$\ell,L,m,M \in \mathbb{R}^n$ as 
  \begin{align}\Omega(\ell,L,m,M) = \{(x,y) \in \mathbb{R}^n \times \mathbb{R}^n \big| \  \ell_i \leq x_i \leq L_i, m_i \leq y_i \leq M_i \textrm{ for all } i = 1, \dots, n\}\ .\label{eq:hyperrectangle} 
  \end{align}
  Furthermore, let $f,g: \cD\rightarrow\mathbb{R}$ be
  such that their restrictions to $\cS\cap \cD$ are convex. The  jointly constrained biconvex program is  the problem
  \begin{align}
      \min_{(x,y)\in\cS \cap \cD }& \;F(x,y)\qquad\textrm{where }\qquad  F(x,y):= f(x) + x^T y + g(y)\ .
  \label{eq:biconvexprogram}
  \end{align}
  The set $\cS$  permits to include joint constraints on the vectors $x$ and $y$. We note that although the restrictions $F(\cdot,y)$ and $F(x,\cdot)$ are convex for each $(x,y)\in\cS\cap \cD$\,---\,a property referred to as biconvexity\,---\,the problem Eq.~\eqref{eq:biconvexprogram} itself is non-convex.

Eq.~\eqref{eq:biconvexprogram} is a generalization of the  bilinear program \eqref{eq:basicbilinearprogram} discussed in the introduction. 
  Indeed,  Eq.~\eqref{eq:basicbilinearprogram} can be transformed into a problem of the form \eqref{eq:biconvexprogram}
  by replacing $x^T (Q y)$ by $x^T z$ and adding the linear constraint $z = Qy$ to the defining constraints of $\cS$.

  \subsubsection{Obtaining lower bounds on the biconvex program\label{sec:lowerboundsbiconvex}}
  Being non-convex, Eq.~\eqref{eq:biconvexprogram} cannot directly be addressed  with convex solvers.   However, one can construct a convex problem whose solution gives a lower bound on the value of Eq.~\eqref{eq:biconvexprogram}. This relies on the  concept of the convex envelope $\Vex_{\cD}F:\cD\rightarrow\mathbb{R}$ of a function $F:\cD\rightarrow \mathbb{R}$, where $\cD\subset \mathbb{R}^n \times \mathbb{R}^n$. It is defined as the pointwise supremum of all convex functions
  underestimating $F$ over $\cD$, i.e., 
  \begin{align}
    (\Vex_{\cD}F)(x,y):=\sup_{
  \substack{G: \cD\rightarrow \mathbb{R}\textrm{ convex}\\
G(v,w)\leq F(v,w)\textrm{ for all }(v,w)\in \cD}}
G(x,y)\qquad\textrm{ for all }(x,y)\in \cD\  ,
  \end{align}
  see~\cite{falk69} for more details.

  To compute the
  convex envelope $\Vex_{\cS\cap \cD}F$ of
  the objective function in Eq.~\eqref{eq:biconvexprogram}, 
  where $\cD=\Omega$ is a hyperrectangle,   one uses the fact that  the convex envelope
  of the function $(x,y)\mapsto x^Ty$ over a hyperrectangle~$\Omega$ (cf.~\eqref{eq:hyperrectangle}) is (see~\cite[Corollary to Theorems 2 and 3]{kahfalk83})
    \begin{align}
    \Vex_{\Omega} (x^T y) &= \sum_{i=1}^n \Vex_{\Omega_i} (x_iy_i) \qquad&\textrm{ for all } (x,y) \in \Omega\ ,
   \end{align}
    where 
\begin{align}
  \Omega_i := \{ (x, y) \in \mathbb{R} \times \mathbb{R}: \ell_i \leq x \leq L_i, m_i \leq y \leq M_i\}=[\ell_i,L_i]\times [m_i\times M_i]\subset \mathbb{R}\times \mathbb{R}
\end{align}
is the projection of the hyperrectangle $\Omega$ onto the $i$-th pair of coordinates for $i = 1, \dots, n$. The convex envelope of the function $(x,y)\mapsto xy$ 
over~$\Omega_i\subset\mathbb{R}\times\mathbb{R}$ is (see \cite[Theorem 2]{kahfalk83})
  \begin{align}
    \Vex_{\Omega_i}(x y) = \max\{m_i x + \ell_i y - \ell_i m_i, M_i x + L_i y_i - L_i M_i \}\ .\label{eq:convexenvelopeinterval}
  \end{align}
  Hence the convex envelope of $F$ in Eq.~\eqref{eq:biconvexprogram} over a hyperrectangle~$\Omega$ has a simple expression, i.e.\
  \begin{align}
  (\Vex_{\Omega}F)(x,y)=f(x) + \sum_{i=1}^n \Vex_{\Omega_i} (x_iy_i) + g(y) \label{eq:convexenvgeneral} \ .
  \end{align}
  In addition to being a convex underestimator for~$F$, the function $\Vex_{\Omega}F$ has the important property that it agrees with $F$ on the boundary  
\begin{equation}
  \partial\Omega = \Omega \setminus  \{ (x,y) \in \mathbb{R}^n \times \mathbb{R}^n\ \big|\ \ell_i < x_i < L_i, m_i < y_i < M_i \textrm{ for all } i = 1, \dots, n\}\ 
\end{equation}
of $\Omega$. Indeed, this follows from the analogous property 
$\Vex_\Omega (x^Ty)=x^Ty$ for all $(x,y)\in \partial\Omega$ for the function $(x,y)\mapsto x^Ty$, see~\cite[Theorem 3]{kahfalk83}. 

Note that, given~\eqref{eq:convexenvgeneral}, the problem
$\min_{(x,y)\in \cS\cap \Omega}(\Vex_\Omega F)(x,y)$ 
can be treated using a convex solver, giving a value $\underline{\alpha}$.
 In particular, since $\Vex_\Omega F$ underestimates~$F$ over $\Omega$ (and hence over $\cS\cap \Omega$), the value~$\underline{\alpha}(\Omega)$ provides a (global) lower bound on the problem~\eqref{eq:biconvexprogram}. Trivially, any point $(x^*,y^*)\in\cS\cap \Omega$ (including one that achieves the minimum 
 of $\Vex_\Omega F$)  also provides an upper bound~$\overline{\alpha}(\Omega)=F(x^*,y^*)$ on~\eqref{eq:biconvexprogram}. Thus, we obtain
 \begin{align}
 \underline{\alpha}(\Omega)\leq \min_{(x,y)\in\cS\cap \Omega}F(x,y)\leq \overline{\alpha}(\Omega)\ .
 \end{align}

 We note that the reasoning here applies to any hyperrectangle~$\Omega$. 
 
  \subsubsection{The branch-and-bound algorithm for the biconvex program}\label{sec:biconvex}
  
  We can now sketch the algorithm of~\cite{kahfalk83} which solves problem \eqref{eq:biconvexprogram}. The algorithm takes as input the functions $f$ and $g$, the set $\cS \subset \mathbb{R}^n \times \mathbb{R}^n$ 
   and the vectors  $\ell,L,m,M \in \mathbb{R}^n$ specifying the hyperrectangle $\Omega~=~\Omega(\ell,L,m,M)$. In addition, it accepts a  desired precision $\epsilon > 0$. The algorithm returns a value $\overline{\alpha}$ and a  point $(x^*,y^*) \in \cS \cap \Omega$ such that
  \begin{equation}
    \overline{\alpha} = F(x^*,y^*)  \leq \left(\min_{(x,y) \in \cS \cap \Omega} F(x,y)\right) + \epsilon\ .
    \label{eq:outputalg}
  \end{equation}

   Given the technique for finding lower bounds on~$\min_{(x,y)\in\cS\cap \cD}F(x,y)$ discussed in Section~\ref{sec:lowerboundsbiconvex}, the main idea underlying the algorithm is to apply this strategy to increasingly smaller hyperrectangle~$\Omega$ (which together form a partition of~$D$). The respective upper and lower bounds for each hyperrectangle give (global) upper and lower bounds on the biconvex problem~\eqref{eq:biconvexprogram}.

   More precisely, the algorithm keeps track of a finite list $\cP$ of 
   hyperrectangles which form a partition of~$\cD$. In addition,
   for each $\Omega\in\cP$, the values $(\underline{\alpha}(\Omega),\overline{\alpha}(\Omega))$ are computed and kept track of such that
   \begin{align}
   \underline{\alpha}(\Omega)\leq \min_{(x,y)\in\cS\cap \Omega}F(x,y)\leq \overline{\alpha}(\Omega)\ .
   \end{align}
   Finally, $z(\Omega)=(x,y)\in\cS\cap \Omega$ will be an element such that $F(z)=\overline{\alpha}(\Omega)$. 
   
   As a consequence, the quantities
  \begin{align}
    \underline{\alpha}(\cP)=\min_{\Omega\in\cP}\underline{\alpha}(\Omega)\qquad \textrm{and} \qquad
  \overline{\alpha}(\cP)=\min_{\Omega\in\cP}\overline{\alpha}(\Omega)\ 
 \end{align}
 constitute global upper and lower bounds on the problem Eq.~\eqref{eq:biconvexprogram},  that is,
 \begin{align}
  \underline{\alpha}(\cP)\leq \min_{(x,y)\in\cS\cap \Omega}F(x,y)\leq \overline{\alpha}(\cP)\ .
 \end{align}
 As soon as $\overline{\alpha}(\cP)-
 \underline{\alpha}(\cP)\leq \epsilon$, the algorithm returns $\overline{\alpha}(\cP)$ and $(x^*,y^*)=z(\Omega')$, where $\Omega'\in\cP$ is such that $F(z(\Omega'))=\overline{\alpha}(\cP)$.

  Recall that $\cD$ itself is a hyperrectangle by definition of the problem. Consequently, we begin with the (trivial) partition $\cP=\{\cD\}$. Since an algorithm for computing bounds $(\underline{\alpha}(\Omega)$, $\overline{\alpha}(\Omega))$ and points $z(\Omega)\in\cS\cap \Omega$ for any hyperrectangle~$\Omega$ is already constructed, it remains to specify how $\cP$ is successively refined. 
  
  Assume that the algorithm has not returned a solution yet, i.e., that 
    \begin{align}
    \overline{\alpha}(\cP)-\underline{\alpha}(\cP)\geq \epsilon\ .\label{eq:terminationconditionnot}
    \end{align} 
 The idea here is to try to improve the worst lower bound. That is,
 pick a hyperrectangle~$\Omega\in\cP$ such that $\underline{\alpha}(\cP)=\underline{\alpha}(\Omega)$. 
 \begin{figure}\centering
  \includegraphics{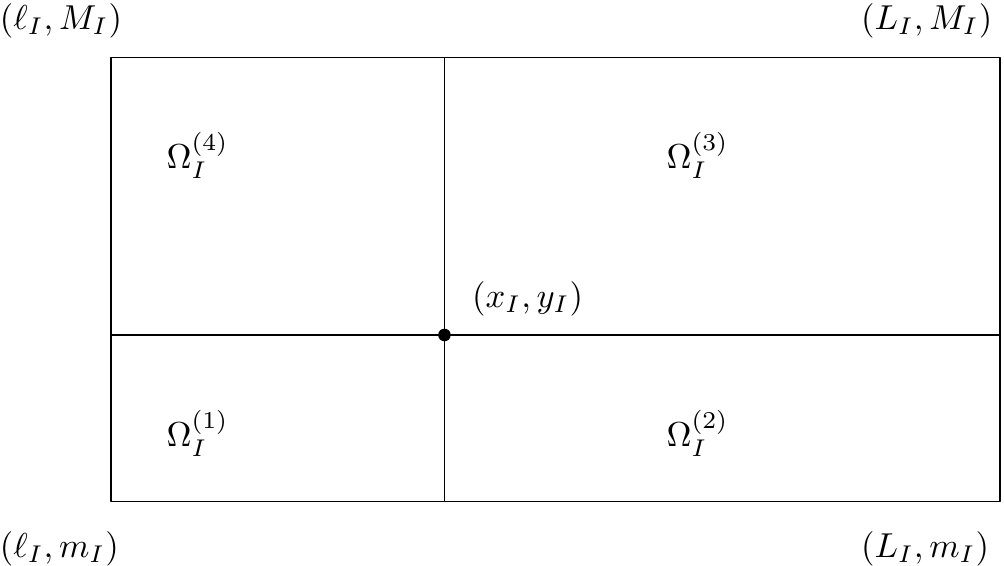}
  \caption{The creation of hyperrectangles $\Omega^{(1)}, \Omega^{(2)}, \Omega^{(3)}, \Omega^{(4)}$ from $\Omega$. This shows the projection onto the coordinates $(x_I,y_I)$ of these hyperrectangles.\label{fig:hyperrectanglesplit}}
  \end{figure}
 We then subdivide~$\Omega$ in $4$~new hyperrectangles $\Omega^{(1)},\ldots,\Omega^{(4)}$. To do so, observe that Eq.~\eqref{eq:terminationconditionnot} implies that $\overline{\alpha}(\Omega)-\underline{\alpha}(\Omega)\geq \epsilon$. 
 Hence, by definition of $\overline{\alpha}(\Omega)$ and $\underline{\alpha}(\Omega)$, there must exist at least one $i \in \{ 1, \dots, n\}$ such that $\Vex_{\Omega}(x_iy_i)~<~x_i y_i$. We pick the index $I$ which leads to the largest difference between the two sides of this inequality and split up the rectangle $\Omega$ into four subrectangles, arriving at the new hyperrectangles $\{ \Omega^{(j)}\}_{j = 1}^4$. For each $j\in \{1,\ldots,4\}$, the hyperrectangle $\Omega^{(j)}$ is defined by its projections
 \begin{align}
 \Omega_{i}^{(j)} =
 \begin{cases}
 \Omega_i\qquad &\textrm{ if }i\in \{1,\ldots,n\}\backslash \{I\}\\
 A^{(j)}\qquad &\textrm{ for }i=I\ .
 \end{cases}
 \end{align}
 onto pairs of coordinates. 
 Here $\{A^{(j)}\}_{j=1}^4$ is a certain partition of $\Omega_I\subset\mathbb{R}\times\mathbb{R}$ into four subrectangles, as shown in Fig.~\ref{fig:hyperrectanglesplit}.
 The latter is defined by the pair of $I$-th coordinates  $(x_I, y_I)$
 of the point $z(\Omega)=(x,y)$, as shown in Fig.~\ref{fig:hyperrectanglesplit}.  Hence we have constructed a partition $\{ \Omega^{(j)} \}_{j=1}^4$ of $\Omega$ into smaller hyperrectangles. 
 
 These steps are iterated until Eq.~\eqref{eq:terminationconditionnot} is no longer satisfied.   This procedure can be shown to converge to a globally optimal value of the problem~\eqref{eq:biconvexprogram}, as done in~\cite{kahfalk83}.

\section{Jointly constrained semidefinite bilinear programming\label{sec:mainalgorithm}}
  Suppose self-adjoint operators $Q \in \selfadjointops(\mathbb{C}^p \otimes \mathbb{C}^q), A \in \selfadjointops(\mathbb{C}^q), B \in \selfadjointops(\mathbb{C}^p)$ are given.  Here $\selfadjointops(\cH)$ denotes the real vector space of self-adjoint operators
  on a Hilbert space~$\cH$ with  respect to the
  Hilbert-Schmidt inner product $\langle A,B\rangle=\tr(A B)$.
    Define a function $F:\selfadjointops(\mathbb{C}^p) \times \selfadjointops(\mathbb{C}^q)\rightarrow\mathbb{R}$ by
  \begin{align}
  F(X,Y):=\tr\left((X\otimes Y)Q\right)+\tr(A X)+\tr(B Y) \ .\label{eq:sdpproblemfunction}
  \end{align}
    We consider the problem
  \begin{align}
    \inf_{(X,Y)\in \cS} F(X,Y)\ ,
  \label{eq:sdpproblem}
  \end{align}
where $\cS \subset \selfadjointops(\mathbb{C}^p) \times \selfadjointops(\mathbb{C}^q)$ is defined by a family of semidefinite constraints, which may involve both $X$ and $Y$ (in particular,~$\cS$ is convex). We note that the function~$F$ is again biconvex but not convex. We refer to Eq.~\eqref{eq:sdpproblem} as a jointly constrained semidefinite bilinear program.

A first step to construct an algorithm for~\eqref{eq:sdpproblem} is to rephrase it in a form similar to~\eqref{eq:biconvexprogram}. To do so,
let $\{\eta_j\}_{j=1}^{p^2} \subset\selfadjointops(\mathbb{C}^p)$ and $\{\xi_k\}_{k=1}^{q^2} \subset \selfadjointops(\mathbb{C}^q)$ be orthonormal operator bases of the real vector spaces~$\selfadjointops(\mathbb{C}^p)$ and~$\selfadjointops(\mathbb{C}^q)$,  respectively.
It will be convenient to express
operators in terms of coefficients
in bases that are rotated with respect to $\{\eta_j\}_{j=1}^{p^2}$
and $\{\xi_k\}_{k=1}^{q^2}$, with a rotation depending on the objective function.
Consider the $p^2\times q^2$-matrix $U_{j,k}=\tr(Q(\eta_j\otimes\xi_k))$ and let \begin{align}
    U=S \Delta T\qquad\textrm{ where } \Delta \in \mathbb{R}^{p^2 \times q^2}\
    \label{eq:svd}
  \end{align}
    be its singular value decomposition, i.e., $S\in\mathbb{R}^{p^2\times p^2}$ and
     $T\in\mathbb{R}^{q^2\times q^2}$
   are orthogonal, and $\Delta$ has
   the singular values~$\{\sigma_j\}_{j=1}^K$ on the diagonal (here $K\leq \min\{p^2,q^2\}$).
    Define the map
    \begin{align}
      \Gamma:\selfadjointops(\mathbb{C}^p) \times \selfadjointops(\mathbb{C}^q) &\rightarrow \mathbb{R}^{p^2}\times\mathbb{R}^{q^2}\\
  (X,Y) &\mapsto \Gamma(X,Y)=(x(X),y(Y))\ ,\\
    \end{align}
where
  \begin{align}
  x_j(X)=\sum_{k=1}^{p^2}S_{k,j}\tr(X\eta_k)\qquad&\textrm{ and }\qquad
    y_k(Y)=\sum_{\ell=1}^{q^2}T_{k,\ell}\tr(Y\xi_\ell)\ .
  \end{align}
    Let $(a,b)=(x(A),y(B))\in\mathbb{R}^{p^2}\times\mathbb{R}^{q^2}$. Define the function $f:\mathbb{R}^{p^2}\times \mathbb{R}^{q^2}\rightarrow\mathbb{R}$ by
    \begin{align}
    f(x,y):=\sum_{j=1}^K\sigma_j x_j y_j + \sum_{j=1}^{p^2} a_j x_j+\sum_{k=1}^{q^2} b_ky_k\ .\label{eq:fvectorformexplicit}
    \end{align}
    
    Using this construction we can now reduce the matrix problem to an equivalent vector problem:
\begin{lem}
    $\Gamma$ is one-to-one and $F(X,Y)=f(x(X),y(Y))$ for all $(X,Y)\in\cS$. In particular,
  \begin{align}
    \inf_{(X,Y) \in S} F(X,Y) = \inf_{(x,y) \in \Gamma(\cS)}f(x,y)\ .
  \end{align}
  \label{lem:vectorproblem}
\end{lem}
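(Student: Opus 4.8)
The plan is to treat the two claims separately, since the injectivity of $\Gamma$ is a purely linear-algebraic fact, whereas the functional identity $F=f\circ\Gamma$ is a short computation that hinges on the singular value decomposition~\eqref{eq:svd}.

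For injectivity, I would first record that since $\{\eta_k\}_{k=1}^{p^2}$ is an orthonormal basis of $\selfadjointops(\C^p)$, the coordinate map $X\mapsto c(X):=(\tr(X\eta_k))_{k=1}^{p^2}$ is a linear isomorphism onto $\R^{p^2}$, with inverse $c\mapsto\sum_k c_k\eta_k$; likewise $Y\mapsto d(Y):=(\tr(Y\xi_\ell))_{\ell=1}^{q^2}$ is an isomorphism onto $\R^{q^2}$. The definitions of $x(X)$ and $y(Y)$ then read, in matrix form, $x(X)=S^{T}c(X)$ and $y(Y)=T\,d(Y)$. Because $S$ and $T$ are orthogonal, hence invertible, each of $X\mapsto x(X)$ and $Y\mapsto y(Y)$ is a composition of two bijections, so $\Gamma$ is a bijection and in particular one-to-one.

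For the identity, I would expand $X=\sum_k c_k\eta_k$ and $Y=\sum_\ell d_\ell\xi_\ell$ and handle the three summands of $F$ in turn. For the bilinear term, multilinearity of the tensor product and of the trace gives $\tr((X\otimes Y)Q)=\sum_{k,\ell}c_k d_\ell\,\tr((\eta_k\otimes\xi_\ell)Q)=c^{T}Ud$, recognizing the matrix $U$ of~\eqref{eq:svd}. Substituting $U=S\Delta T$ and using $c^{T}S=(S^{T}c)^{T}=x^{T}$ together with $T d=y$ turns this into $x^{T}\Delta y=\sum_{j=1}^{K}\sigma_j x_j y_j$, since $\Delta$ carries the singular values on its diagonal. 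For the linear terms, set $\alpha:=c(A)$ and $\beta:=d(B)$, so that orthonormality of the bases yields $\tr(AX)=\alpha^{T}c$ and $\tr(BY)=\beta^{T}d$; noting $a=S^{T}\alpha$ and $b=T\beta$, the relations $SS^{T}=\id$ and $T^{T}T=\id$ collapse $a^{T}x=\alpha^{T}SS^{T}c=\alpha^{T}c=\tr(AX)$ and, symmetrically, $b^{T}y=\tr(BY)$. Summing the three contributions reproduces exactly~\eqref{eq:fvectorformexplicit}, giving $F(X,Y)=f(x(X),y(Y))$; I would emphasize that this holds on all of $\selfadjointops(\C^p)\times\selfadjointops(\C^q)$, not merely on $\cS$. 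The infimum identity is then immediate, since the bijection $\Gamma$ restricts to a bijection $\cS\to\Gamma(\cS)$ under which $F$ and $f$ agree at corresponding points, so the two infima coincide.

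I do not expect a genuine obstacle; the only point requiring care is the bookkeeping of transposes\,---\,keeping straight that $S$ acts on the $X$-coordinates through $S^{T}$ while $T$ acts on the $Y$-coordinates directly, so that the product $c^{T}S\Delta T d$ telescopes correctly against the factors of the singular value decomposition.
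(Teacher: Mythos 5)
Your proposal is correct and follows essentially the same route as the paper: expand in the orthonormal operator bases, write the bilinear term as $\hat{x}^{T}U\hat{y}$, and absorb the orthogonal factors $S$ and $T$ of the singular value decomposition into the coordinates so that $U$ collapses to $\Delta$. The paper's proof is just a terser version of the same substitution argument, with injectivity of $\Gamma$ left implicit in the orthogonality of $S$ and $T$, which you spell out explicitly.
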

\begin{proof}
Observe that for $(X,Y)$, we have
  \begin{align}
 F(X,Y)=\sum_{j=1}^{p^2} \sum_{k=1}^{q^2} \hat{x}_j\hat{y}_k U_{j,k}+\sum_{j=1}^{p^2} \hat{x}_j \hat{a}_j+\sum_{k=1}^{q^2} \hat{y}_k \hat{b}_k\ ,
  \end{align}
  where $\hat{x}_j=\tr(X\eta_j)$, $\hat{a}_j=\tr(A\eta_j)$ for $j=1,\ldots,p^2$, and
  similarly $\hat{y}_k=\tr(Y\xi_k)$, $\hat{b}_k=\tr(B\xi_k)$ for $k=1,\ldots,q^2$.
  Using the variable substitutions
  \begin{align}
  \begin{matrix}
  x&=&S^T \hat{x}\\
  y&=&T \hat{y}
  \end{matrix}\qquad\textrm{and }\qquad
  \begin{matrix}
    a&=&S^T \hat{a}\\
    b&=&T \hat{b}\ ,
    \end{matrix}
  \end{align}the claim follows.
\end{proof}
Given Lemma~\ref{lem:vectorproblem}, our algorithm proceeds by
first finding a hyperrectangle~$\cD\subset\mathbb{R}^{p^2}\times\mathbb{R}^{q^2}$ that contains the set $\Gamma(\cS)$ (see Section~\ref{sec:boundinghyperrectangle}). We then argue that
 lower bounds on the objective function restricted to hyperrectangles can be computed by solving SDPs (see Section~\ref{sec:sdplowerbound}). A branch-and-bound procedure for the problem~\eqref{eq:sdpproblem} follows.

 For later reference, we give pseudocode of two routines \textsc{ComputeVectorRep} and \textsc{ComputeOperator}, which compute the functions~$\Gamma$ respectively~$\Gamma^{-1}$ appearing in Lemma~\ref{lem:vectorproblem}, see Fig.~\ref{alg:computevectorrep} in Appendix~\ref{app:pseudo}.

\subsection{Finding a bounding hyperrectangle\label{sec:boundinghyperrectangle}}
For $\ell,L\in\mathbb{R}^{p^2}$ and $m,M\in\mathbb{R}^{q^2}$, define the hyperrectangle
\begin{align}
\Omega(\ell,L,m,M) = \{(x,y) \in \mathbb{R}^{p^2} \times \mathbb{R}^{q^2} \big| \  &\ell_j \leq x_j \leq L_j \textrm{ for all } j=1,\dots,p^2 \\
& \textrm{ and } m_k \leq y_k \leq M_k \textrm{ for all } k = 1, \dots,q^2\}\ .
\end{align}

We show the following:
\begin{lem}\label{lem:boundinghyperrectangle}
We can efficiently find  $\ell,L\in\mathbb{R}^{p^2}$ and $m,M\in\mathbb{R}^{q^2}$
such that $\Omega(\ell,L,m,M)$ has minimal volume among all
hyperrectangles $\Omega$ containing the set $\Gamma(\cS)$, where $\Gamma$ is defined as in Lemma~\ref{lem:vectorproblem}. More precisely, we can find such vectors by solving $2(p^2+q^2)$ SDPs in~$(X,Y)\in\cS$.
\end{lem}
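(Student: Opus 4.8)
The plan is to exploit the product structure of both the containment condition and the volume functional, thereby reducing the problem to $2(p^2+q^2)$ independent scalar optimizations, each of which turns out to be a semidefinite program.

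First I would characterize which hyperrectangles contain $\Gamma(\cS)$. A point lies in $\Omega(\ell,L,m,M)$ precisely when its defining inequalities hold coordinate by coordinate, so $\Omega(\ell,L,m,M)$ contains $\Gamma(\cS)$ if and only if these inequalities hold at every $(x(X),y(Y))$ with $(X,Y)\in\cS$. Because the constraints decouple across coordinates, this is equivalent to
\begin{align}
\ell_j \le \inf_{(X,Y)\in\cS} x_j(X),\quad L_j \ge \sup_{(X,Y)\in\cS} x_j(X)\qquad (j=1,\dots,p^2)\ ,
\end{align}
together with the analogous conditions $m_k \le \inf_{(X,Y)\in\cS} y_k(Y)$ and $M_k \ge \sup_{(X,Y)\in\cS} y_k(Y)$ for $k=1,\dots,q^2$. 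Denote the four families of extremal quantities on the right-hand sides by $\ell_j^\star, L_j^\star, m_k^\star, M_k^\star$.

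Next I would argue that minimizing the volume reduces to tightening each edge separately. The volume $\prod_{j}(L_j-\ell_j)\prod_k(M_k-m_k)$ is a product of non-negative edge lengths, and by the containment conditions above each edge length can be chosen independently subject only to $L_j-\ell_j \ge L_j^\star-\ell_j^\star$ and $M_k-m_k \ge M_k^\star-m_k^\star$. Since the product is monotone in each non-negative factor, it is minimized by making every factor as small as the constraints allow, which forces $\ell_j=\ell_j^\star$, $L_j=L_j^\star$, $m_k=m_k^\star$, $M_k=M_k^\star$. This tight box contains $\Gamma(\cS)$ (the non-strict defining inequalities hold by the definition of infimum and supremum, even if these extrema are not attained) and attains the minimal volume. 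Finally, computing each extremal value is an SDP: the map $x_j(X)=\sum_{k}S_{k,j}\tr(X\eta_k)$ is a fixed linear functional of $X$ and $y_k(Y)=\sum_{\ell}T_{k,\ell}\tr(Y\xi_\ell)$ a fixed linear functional of $Y$, so minimizing or maximizing such a functional over the SDP-defined set $\cS\subset\selfadjointops(\mathbb{C}^p)\times\selfadjointops(\mathbb{C}^q)$ is a semidefinite program. Counting $p^2$ lower and $p^2$ upper bounds for the $x$-coordinates and $q^2$ of each for the $y$-coordinates yields $2(p^2+q^2)$ SDPs, each solvable efficiently.

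I do not expect a serious obstacle here; the proof is essentially a bookkeeping argument once the containment condition is seen to decouple. The only point requiring care is the well-definedness of the extrema: for a bounding hyperrectangle of finite volume to exist at all, $\Gamma(\cS)$ must be bounded, so each of the infima and suprema above is finite. This boundedness is therefore a standing hypothesis of the statement (equivalently, the SDPs in question have finite optimal value). Note that attainment of the extrema is not needed, since the constructed box is closed and its defining inequalities are non-strict, so the argument goes through whether or not $\cS$ is compact.
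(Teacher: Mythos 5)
Your proposal is correct and follows essentially the same route as the paper: the paper simply asserts that the minimal bounding hyperrectangle is obtained by computing the coordinatewise infima and suprema of $x_j(X)$ and $y_k(Y)$ over $\cS$, each of which is an SDP since these are fixed linear functionals, giving $2(p^2+q^2)$ SDPs. You merely spell out the decoupling/monotonicity argument and the boundedness caveat that the paper leaves implicit.
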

\begin{proof}
Clearly, we need to compute
\begin{align}
\begin{matrix}
\ell_j^*&=&\inf_{(X,Y) \in \cS} x_j(X)\;\;&\textrm{ and }\;\;&
  L_j^*&=&\sup_{(X,Y) \in \cS} x_j(Y)\;\;&\textrm{ for }\;\; &j=1,\ldots, p^2\\
  & & & & &\textrm{ as well as }\\
  m_k^*&=&\inf_{(X,Y) \in \cS} y_k(X)\;\;&\textrm{ and }\;\;&
  M_k^*&=&\sup_{(X,Y) \in \cS} y_k(Y)\;\;&\textrm{ for }\;\; &k=1,\ldots, q^2\ .
  \end{matrix}
  \end{align}
  Here we write $\Gamma(X,Y)=(x(X),y(Y))$ as in Lemma~\ref{lem:vectorproblem}. It is easy to see that each of these optimization problems is  an SDP. For example, for each $j\in \{1,\ldots,p^2\}$, we have
  \begin{align}
  \ell_j^*&=\inf_{(X,Y)\in\cS} \sum_{k=1}^{p^2} S_{k,j}\tr(X \eta_k)\
  \end{align}
  and similar reasoning applies to the values $L_j^*,m_k^*$ and $M_k^*$.
\end{proof}

Pseudocode for the associated procedure is given in Fig.~\ref{alg:boundingrec} in Appendix~\ref{app:pseudo}.

\subsection{Obtaining lower bounds on the semidefinite bilinear program\label{sec:sdplowerbound}}
As in Section~\ref{sec:lowerboundsbiconvex}, we
next discuss how to find lower and upper bounds $\underline{\alpha}(\Omega)$, $\overline{\alpha}(\Omega)$ on the objective function~$F(X,Y)$ restricted to the preimage $\Gamma^{-1}(\Omega)$ of a hyperrectangle~$\Omega\subset\mathbb{R}^{p^2}\times\mathbb{R}^{q^2}$. That is, in terms of the
function $f: \mathbb{R}^{p^2} \times \mathbb{R}^{q^2} \rightarrow \mathbb{R}$ defined in Lemma~\ref{lem:vectorproblem}, these values  satisfy
\begin{align}
\underline{\alpha}(\Omega)&\leq
\inf_{(x,y)\in \Gamma(\cS)\cap \Omega} f(x,y)\leq \overline{\alpha}(\Omega)\ .\label{eq:twoprovelowerboundalpha}
\end{align}
For the lower bound, recalling the definition of the convex envelope introduced in Section~\ref{sec:biconvex}, it suffices to compute \begin{align}
\underline{\alpha}(\Omega)&=\inf_{(x,y)\in \Gamma(\cS)\cap \Omega} (\Vex_{\Omega}f)(x,y)\ .\label{eq:problemtooptimize2}
\end{align}
On the other hand, any element $(x^*,y^*)\in \Gamma(\cS)\cap \Omega$ provides an upper bound
$\overline{\alpha}(\Omega)=f(x^*,y^*)$.

To compute Eq.~\eqref{eq:problemtooptimize2}, we proceed in two steps. First, we give an explicit expression for $\Vex_{\Omega}f$.
\begin{lem}\label{lem:explicitconvexenve}
  Let $\Omega = \Omega(\ell,L,m,M)\subset \mathbb{R}^{p^2} \times \mathbb{R}^{q^2}$ be a hyperrectangle and $f: \mathbb{R}^{p^2} \times \mathbb{R}^{q^2} \rightarrow \mathbb{R}$ as in Lemma~\ref{lem:vectorproblem}. Then the convex envelope of $f$ over $\Omega$ is given by
\begin{align}
  (\Vex_{\Omega}f)(x,y) = \sum_{j=1}^K \max\{h_j^0(x_j,y_j), h_j^1(x_j,y_j)\} + \sum_{j=1}^{p^2} a_j x_j + \sum_{k=1}^{q^2} b_ky_k\ ,
\end{align}
where
\begin{align}
\begin{matrix}
h_j^0(x_j,y_j)& =& \sigma_j \left(m_j x_j + \ell_j y_j - \ell_j m_j \right)\qquad&\textrm{ and }\\
h_j^1(x_j,y_j) &= &\sigma_j \left(M_j x_j + L_j y_j - L_j M_j \right)\ .
\end{matrix}\label{eq:hfunctions}
\end{align}
for $j=1,\ldots, K$.
\end{lem}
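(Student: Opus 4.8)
The plan is to reduce the statement to the single-coordinate-pair formula \eqref{eq:convexenvelopeinterval} already quoted from Al-Khayyal and Falk, by first stripping off the affine part of $f$ and then exploiting the separability of the remaining bilinear part over the product hyperrectangle.

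First I would split the objective \eqref{eq:fvectorformexplicit} as $f = g + \varphi$, where $g(x,y) = \sum_{j=1}^K \sigma_j x_j y_j$ is the bilinear part and $\varphi(x,y) = \sum_{j=1}^{p^2} a_j x_j + \sum_{k=1}^{q^2} b_k y_k$ is affine. Adding an affine function merely shifts every convex underestimator by the same amount: a function $G$ convexly underestimates $g + \varphi$ on $\Omega$ if and only if $G - \varphi$ convexly underestimates $g$, since subtracting an affine function preserves convexity. Taking the supremum in the definition of the convex envelope therefore gives $\Vex_{\Omega} f = \Vex_{\Omega} g + \varphi$, which already accounts for the two linear sums in the claimed formula. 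It thus remains to show $\Vex_{\Omega} g = \sum_{j=1}^K \max\{h_j^0, h_j^1\}$.

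Next I would invoke separability. The function $g$ is a sum of terms $g_j(x_j,y_j) = \sigma_j x_j y_j$, each depending on a single coordinate pair, and $\Omega$ is a product of the rectangles $\Omega_j = [\ell_j,L_j]\times[m_j,M_j]$ together with the uncoupled intervals for the coordinates $x_j$ with $j>K$ and $y_k$ with $k>K$, on which $g$ does not depend. The Corollary to Theorems 2 and 3 of \cite{kahfalk83}, already applied to $x^T y$ in Section~\ref{sec:lowerboundsbiconvex}, states exactly that the convex envelope of such a separable function over a product hyperrectangle equals the sum of the convex envelopes of its summands, while the directions along which $g$ is constant contribute nothing. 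Hence $\Vex_{\Omega} g = \sum_{j=1}^K \Vex_{\Omega_j}(\sigma_j x_j y_j)$. For each $j$, since $\sigma_j \geq 0$ (being a singular value) and convex envelopes commute with multiplication by a nonnegative scalar, I would write $\Vex_{\Omega_j}(\sigma_j x_j y_j) = \sigma_j \Vex_{\Omega_j}(x_j y_j)$; applying \eqref{eq:convexenvelopeinterval} and pulling the nonnegative factor $\sigma_j$ inside the maximum produces precisely $\max\{h_j^0(x_j,y_j), h_j^1(x_j,y_j)\}$ with $h_j^0, h_j^1$ as in \eqref{eq:hfunctions}. Summing over $j$ and adding back $\varphi$ yields the claim.

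The only genuinely delicate point is the bookkeeping in the separability step: one must confirm that $\Omega$ really does factor as a product over the coordinate pairs relevant to $g$, so that the cited corollary applies verbatim, and that the coordinates appearing only in $\varphi$ (in particular all of them once $j>K$) are correctly treated as directions along which the bilinear part is constant. Everything else follows from the elementary behaviour of convex envelopes under affine shifts and nonnegative scaling, combined with the single-pair formula \eqref{eq:convexenvelopeinterval}.
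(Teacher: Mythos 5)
Your proposal is correct and follows essentially the same route as the paper, which simply asserts the chain of equalities ``by calculations analogous to those discussed in Section~\ref{sec:lowerboundsbiconvex}'', i.e.\ affine shift, separability over the product hyperrectangle via the Corollary to Theorems 2 and 3 of \cite{kahfalk83}, and the single-pair formula \eqref{eq:convexenvelopeinterval} with the nonnegative factor $\sigma_j$ pulled inside the maximum. You in fact supply more justification than the paper does, in particular for the affine-shift identity $\Vex_\Omega(g+\varphi)=\Vex_\Omega g+\varphi$ and for the treatment of the coordinates with $j>K$ on which the bilinear part is constant.
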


\begin{proof}
By definition of $f$ (see Lemma~\ref{lem:vectorproblem}) and
calculations analogous to those discussed in Section~\ref{sec:lowerboundsbiconvex}, we have
\begin{align}
  (\Vex_{\Omega}f)(x,y)&=\sum_{j=1}^K  \Vex_{\Omega_j} \left(\sigma_jx_jy_j\right) + \sum_{j=1}^{p^2} a_jx_j+\sum_{k=1}^{q^2} b_ky_k\\
  &=\sum_{j=1}^K
  \max\{\sigma_j\cdot(m_jx_j+\ell_jy_j-\ell_jm_j),\sigma_j\cdot (M_jx_j+L_jy_j-L_jM_j)\} + \sum_{j=1}^{p^2} a_j x_j + \sum_{k=1}^{q^2} b_k y_k\\
  &= \sum_{j=1}^K \max\{h_j^0(x_j,y_j), h_j^1(x_j,y_j)\} + \sum_{j=1}^{p^2} a_j x_j + \sum_{k=1}^{q^2} b_ky_k\ ,
\end{align}
as claimed.
\end{proof}
In the following Lemma, we show that $\inf_{(x,y) \in\Gamma(\cS) \cap \Omega} (\Vex_\Omega f)(x,y)$ can be expressed as an SDP. This provides
an efficient way of computing the lower bound~\eqref{eq:problemtooptimize2}.

\begin{lem}
  Let $\Omega\subset\mathbb{R}^{p^2}\times\mathbb{R}^{q^2}$ be a hyperrectangle and $\Gamma(\cS)$ be a set of vectors obtained from a set $\cS$ of semidefinite constraints as described in Lemma~\ref{lem:vectorproblem}. Furthermore, let $\Gamma(\cS) \cap \Omega$ be nonempty. Then the problem
  \begin{align}
    \inf_{(x,y)\in \Gamma(\cS)\cap \Omega} (\Vex_{\Omega}f)(x,y)
    \label{eq:sdplemma}
  \end{align}
  is a semidefinite program in $(X,Y,r)$, where $(X,Y)\in\selfadjointops(\mathbb{C}^p)\times\selfadjointops(\mathbb{C}^q)$ and $r\in\mathbb{R}^K$.
  \label{lem:sdpconstraints}
\end{lem}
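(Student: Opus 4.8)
The plan is to introduce an auxiliary vector $r \in \mathbb{R}^K$ to linearize the $\max$ appearing in $\Vex_\Omega f$, and then to re-express everything back in terms of the original matrix variables $(X,Y)$ so that the membership $(x,y) \in \Gamma(\cS)$ becomes the original family of semidefinite constraints defining $\cS$. Concretely, by Lemma~\ref{lem:explicitconvexenve} the objective is $\sum_{j=1}^K \max\{h_j^0(x_j,y_j), h_j^1(x_j,y_j)\} + \sum_j a_j x_j + \sum_k b_k y_k$. Since we are minimizing, each $\max\{h_j^0, h_j^1\}$ can be replaced by a new scalar variable $r_j$ subject to the two linear constraints $r_j \geq h_j^0(x_j,y_j)$ and $r_j \geq h_j^1(x_j,y_j)$; at the optimum $r_j$ will equal the larger of the two. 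This converts the problem into
\begin{align}
\inf_{(x,y),\,r}\ \sum_{j=1}^K r_j + \sum_{j=1}^{p^2} a_j x_j + \sum_{k=1}^{q^2} b_k y_k
\end{align}
subject to those $2K$ linear inequalities, the box constraints $\ell_j \leq x_j \leq L_j$ and $m_k \leq y_k \leq M_k$ defining $\Omega$, and the condition $(x,y) \in \Gamma(\cS)$.

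Next I would eliminate the vector variables in favour of $(X,Y)$. Recall from Lemma~\ref{lem:vectorproblem} that $x_j(X) = \sum_k S_{k,j}\tr(X\eta_k)$ and $y_k(Y) = \sum_\ell T_{k,\ell}\tr(Y\xi_\ell)$ are linear functionals of $X$ and $Y$ respectively, and that $\Gamma$ is one-to-one, so $(x,y) \in \Gamma(\cS)$ holds exactly when there is a pair $(X,Y) \in \cS$ with $x = x(X)$, $y = y(Y)$. The objective, the box constraints, and the linearized $\max$ constraints $h_j^0, h_j^1$ are all affine in the entries $\tr(X\eta_k)$ and $\tr(Y\xi_\ell)$ — hence affine (indeed linear) in $(X,Y)$ — because $h_j^0$ and $h_j^1$ in~\eqref{eq:hfunctions} are affine in $(x_j,y_j)$ with the singular values $\sigma_j$ and the box endpoints $\ell_j, L_j, m_j, M_j$ as fixed coefficients. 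Substituting these linear expressions turns every constraint into a scalar linear (affine) inequality in $(X,Y,r)$, which is a special case of a semidefinite constraint (a $1\times 1$ block). Together with the defining semidefinite constraints of $\cS$, the whole feasible set is described by semidefinite constraints, and the objective is linear in $(X,Y,r)$; this is exactly the form of an SDP over the variables $(X,Y,r) \in \selfadjointops(\mathbb{C}^p)\times\selfadjointops(\mathbb{C}^q)\times\mathbb{R}^K$.

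The one point requiring genuine justification — and the main obstacle — is verifying that introducing the $r_j$ variables does not change the infimum, i.e.\ that the lifted problem is \emph{equivalent} to~\eqref{eq:sdplemma} rather than merely a relaxation. This is where I would be careful: for any feasible $(x,y)$, the vector $r$ with $r_j = \max\{h_j^0(x_j,y_j), h_j^1(x_j,y_j)\}$ is feasible in the lifted problem and achieves the same objective value, so the lifted infimum is at most~\eqref{eq:sdplemma}; conversely, for any feasible $(x,y,r)$ the constraints force $r_j \geq \max\{h_j^0, h_j^1\}$, so the lifted objective is at least $\sum_j \max\{h_j^0,h_j^1\} + \sum_j a_j x_j + \sum_k b_k y_k = (\Vex_\Omega f)(x,y)$, giving the reverse inequality. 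Since $\Gamma(\cS)\cap\Omega$ is assumed nonempty, both problems are feasible, and the two infima coincide. The remaining bookkeeping — writing the box constraints and the $h_j$-inequalities explicitly as $1\times 1$ positive-semidefinite blocks, and noting that linear functionals of $X$ and $Y$ are permissible in an SDP — is routine.
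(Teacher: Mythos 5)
Your proposal is correct and follows essentially the same route as the paper's proof: invoke Lemma~\ref{lem:explicitconvexenve}, linearize each maximum via an epigraph variable $r_j$ with the two constraints $h_j^b(x_j,y_j)\leq r_j$, and observe that the objective, the box constraints, and these inequalities are all affine in $(X,Y,r)$ once the coordinates $x_j, y_k$ are expressed as the linear functionals $\tr(X\sum_k S_{k,j}\eta_k)$ and $\tr(Y\sum_\ell T_{k,\ell}\xi_\ell)$. Your explicit two-sided argument that the lifted problem has the same infimum is a point the paper compresses into the one-line identity $\max\{h^0,h^1\}=\inf\{r : h^0\leq r,\ h^1\leq r\}$, but it is the same idea.
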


\begin{proof}
Introduce the notation
\begin{align}
\hat{\ell}_j^0&=\ell_j\ ,\qquad &\hat{\ell}_j^1&=L_j\qquad &\textrm{for }j=1,\ldots,p^2\ ,\\
  \hat{m}_k^0&=m_k\ ,\qquad &\hat{m}_k^1&=M_k\qquad &\textrm{for }k=1,\ldots,q^2\ ,
\end{align}
for the lower- and upper bounds determining the hyperrectangle~$\Omega=\Omega(\ell,L,m,M)$. Then the functions $h_j^0, h_j^1$  introduced in Eq.~\eqref{eq:hfunctions} can be expressed as
\begin{align}
h^{b}_j(x_j,y_j)=\sigma_j\cdot (\hat{m}_j^b x_j+\hat{\ell}_j^b y_j-\hat{\ell}^b_j\hat{m}^b_j)\qquad\textrm{ for }\qquad b\in \{0,1\}\ .
\end{align}
We have by Lemma~\ref{lem:explicitconvexenve}
    \begin{align}\inf_{(x,y)\in \Gamma(\cS)\cap \Omega} (\Vex_{\Omega}f)(x,y)
&=\inf_{(x,y)\in\Gamma(\cS)\cap \Omega}
\sum_{j=1}^K
\max\{h_j^0(x_j,y_j),h_j^1(x_j,y_j)\} + \sum_{j=1}^{p^2} a_j x_j + \sum_{k=1}^{q^2} b_k y_k \\
&=\inf_{(x,y)\in\Gamma(\cS)\cap\Omega}
\inf_{\substack{r_1,\ldots,r_K\in\mathbb{R}\\
h_j^b(x_j,y_j)\leq r_j\\
\textrm{ for }j=1,\ldots,K,\ b \in \{0, 1\} }}
\sum_{j=1}^K r_j + \sum_{j=1}^{p^2} a_j x_j + \sum_{j=k}^{q^2} b_k y_k\ . \label{eq:tomaximizeexprv}
\end{align}
Here we have replaced each maximum
by a semidefinite program in a scalar, that is, we have used the identity
\begin{align}
\max\{h^0,h^1\}= \inf_{\substack{r\in\mathbb{R}\\
h^0\leq r\\
h^1\leq r}} r\qquad\textrm{ for all }h^0,h^1\in\mathbb{R}\ .
\end{align}

Let us first argue that
in Eq.~\eqref{eq:tomaximizeexprv}, we are optimizing over
a set of tuples $(x,y,r_1,\ldots,r_K)$ that can be described by SDP constraints.  Since $(x,y) \in \Gamma(\cS)$, $\Gamma$ is linear,  and $\cS$ is given by a set of semidefinite constraints on~$(X, Y)$, it suffices to verify that  the additional constraints imposed by $(x,y) \in \Omega$ and the constraints associated with the inner infimum in Eq.~\eqref{eq:tomaximizeexprv} can be expressed in semidefinite form. Indeed, with $(x,y)=\Gamma(X,Y)$ for $(X,Y)\in \Gamma^{-1}(\Omega)\cap \cS$, the constraints take the following form.  Since each function~$h^{b}_{j}(\cdot,\cdot)$ is affine-linear in both arguments,
the expression $h^b_j(x_j,y_j)$ is affine-linear in the operators~$X$ and $Y$.
Explicitly, we have
\begin{align}
h^{b}_{j}(x_j,y_j)&=\sigma_j \left[\tr\left(X\hat{m}_j^{b} \sum_{k=1}^{p^2} S_{k,j}\eta_k\right)+\tr\left(Y\hat{\ell}_j^{b}\sum_{\ell=1}^{q^2} T_{j,\ell}\xi_\ell\right)-\hat{\ell}^{b}_j\hat{m}^{b}_j \right]\
\end{align}
and the constraint
\begin{align}
h^b_j(x_j,y_j)\leq r_j
\end{align}
takes the form
\begin{align}
  \tr(XG^b_j)+\tr(YH^b_j)-r_j\leq s^b_j\ ,
  \label{eq:sdpconstraints1}
\end{align}
where
\begin{align}
  G^b_j&=\sigma_j \hat{m}_j^{b}\sum_{k=1}^{p^2} S_{k,j} \eta_k\ ,\qquad\qquad
  H^b_j=\sigma_j \hat{\ell}_j^{b}\sum_{\ell=1}^{q^2} T_{j,\ell} \xi_\ell\ ,\qquad\qquad
s^b_j=\sigma_j \hat{\ell}_j^b \hat{m}_j^b\
\end{align}
for each $j=1,\ldots,K$.
In addition, the constraints
\begin{align}
\ell_j\leq x_j&\leq L_j\qquad\textrm{ for }\qquad j=1,\ldots,p^2\qquad\textrm{and }\\
m_k\leq y_k&\leq M_k\qquad\textrm{ for }\qquad k=1,\ldots,q^2
\end{align}
become
\begin{align}
   \ell_j&\leq  \tr(X \sum_{k=1}^{p^2} S_{k,j}\eta_k)\leq L_j\qquad\textrm{ for }\qquad j=1,\ldots,p^2\qquad\textrm{and }\\
  m_k&\leq \tr(Y\sum_{\ell=1}^{q^2} T_{k,\ell}\xi_\ell)\leq M_k\qquad\textrm{ for }\qquad k=1,\ldots,q^2\ .
\label{eq:sdpconstraints2}
\end{align}
In summary, we are optimizing the objective function
\begin{align}
\sum_{j=1}^K r_j + \sum_{j=1}^{p^2} a_j x_j + \sum_{k=1}^{q^2} b_k y_k
\end{align}
over tuples $(X,Y,r)$ satisfying the constraints given by Eqs.~\eqref{eq:sdpconstraints1} and \eqref{eq:sdpconstraints2}.
Since this objective function is linear in $X$, $Y$, and $r$, respectively, the problem~\eqref{eq:sdplemma} is indeed a semidefinite program in $(X,Y,r)$.
\end{proof}
We again give pseudocode giving an algorithmic realization of Lemma~\ref{lem:sdpconstraints}, see subroutine \textsc{ComputeBoundsSDP} in Fig.~\ref{alg:computeboundssdp} of Appendix~\ref{app:pseudo}.

\subsubsection*{A branch-and-bound algorithm for jointly constrained semidefinite bilinear programs}
We are now ready to state our branch-and-bound algorithm which solves problem \eqref{eq:sdpproblem}. The algorithm closely follows the algorithm of Al-Khayyal and Falk and only the subroutines need to be adapted.

Our algorithm takes as input a set $\cS \subset \selfadjointops(\mathbb{C}^p) \times \selfadjointops(\mathbb{C}^q)$ defined by SDP constraints,  operators $Q \in \selfadjointops(\mathbb{C}^p \otimes \mathbb{C}^q), A \in \selfadjointops(\mathbb{C}^p), B \in \selfadjointops(\mathbb{C}^q)$ and a desired precision $\epsilon > 0$. It returns a value $\overline{\alpha}$ and an element $(X^*, Y^*) \in \cS$ such that (for the function $F$ defined in Eq.~\eqref{eq:sdpproblemfunction})
\begin{equation}
  \overline{\alpha} =  F(X^*,Y^*)  \leq \left(\inf_{(X,Y) \in \cS} F(X,Y)\right) + \epsilon\ .
\label{eq:outputalgnew}
\end{equation}

  The algorithm is given in Figs~\ref{alg:computevectorrep}--\ref{alg:branch-and-boundsdp} of Appendix~\ref{app:pseudo}. It follows exactly the same pattern as the branch-and-bound algorithm discussed in Section~\ref{sec:biconvex}, with the only modification that lower bounds on the objective function are computed by solving SDPs (instead of general convex programs).  In particular, with an identical analysis as  that of a general branch-and-bound algorithm (see~\cite{kahfalk83}), it follows that the iterative procedure described in the algorithm from Fig.~\ref{alg:branch-and-boundsdp} converges to a global solution of the problem~\eqref{eq:sdpproblem}. In other words, the terminating condition~\eqref{eq:outputalgnew} will always be reached. We note, however, that (as is typical for branch-and-bound algorithms), guarantees for the rate of convergence are typically not available.

\section{Application:  Dobrushin curves of quantum channels\label{sec:dobrushin}}
In this section, we apply our algorithm to
a problem in quantum information theory. We first 
explain this problem in Sections~\ref{sec:unconstrained} and~\ref{sec:constrained}, where we discuss the Dobrushin coefficient and the Dobrushin curve of a channel, respectively. In Section~\ref{sec:dobrushinasbilinear}, we  show that the problem of computing the Dobrushin curve is a semidefinite bilinear program. 
Finally, in Section~\ref{sec:numerics}, we present numerical results obtained by use of our algorithm.

\subsection{Converse to unconstrained coding over cascades\label{sec:unconstrained}}
Let $\Phi$ be a  channel.
Consider a setting where a message~$W$ is sent through a cascade
consisting of~$n$ copies of this channel, with a relay~$\cE_j$ applied before the $j$-th application of~$\Phi$.
We are interested in the amount of information the output 
\begin{align}
Y_n=\Phi\circ\cE_{n}\circ\cdots\circ\cE_2\circ\Phi\circ\cE_1(W)
\end{align}
of this cascade
provides about the input~$W$ for an optimal coding strategy (defined by the choice of relay channels~$\{\cE_j\}_{j=1}^n$). 
Denoting the output after applying the relay $\cE_j$ by $X_j$ and the output of the $j$-th channel~$\Phi$ by $Y_j$, we have the Markov property
\begin{equation}
  W  \rightarrow X_1 \rightarrow Y_1 \rightarrow X_2 \rightarrow Y_2 \rightarrow \cdots \rightarrow X_n \rightarrow Y_n\ .\label{eq:markovchainx}
\end{equation}
For the case where $\Phi=P_{Y|X}$ is a classical channel from
a set~$\cX$ to a set~$\cW$ and the message~$W$ is a binary random variable with uniform distribution on~$\{0,1\}$, a natural information measure is
the variational distance~$\|P_{Y_n|W=0}-P_{Y_n|W=1}\|_{1}$
between the output distributions for different inputs. Accordingly, a key quantity is
the Dobrushin coefficient
\begin{align}
\eta(P_{Y|X})&=\sup_{P_{X_0},P_{X_1}}
\frac{\|P_{Y|X}\circ P_{X_0}-P_{Y|X}\circ P_{X_1}\|_{1}}{\|P_{X_0}-P_{X_1}\|_{1}}\ ,\label{eq:dobrushincoefficientclassical}
\end{align}
where the optimization is over pairs~$(P_{X_0},P_{X_1})$ of distributions on~$\cX$ and where $P_{Y|X}\circ P_X$ is the distribution on~$\cY$ given by the push-forward of~$P_X$. Using the fact that $\|\cdot \|_{1}$ is non-increasing under application of channels, one can eliminate the choice of relays and conclude that
\begin{align}
\|P_{Y_n|W=0}-P_{Y_n|W=1}\|_{1}\leq 
\eta(P_{Y|X})^n\ 
\end{align}
independently of the coding strategy given by $\{\cE_j=P_{X_{j+1}|Y_j}\}$.

Similar reasoning applies to  quantum channels $\Phi:\cB(\cX)\rightarrow\cB(\cY)$ and relays $\cE_j:\cB(\cY)\rightarrow\cB(\cY)$ (i.e., completely positive trace-preserving maps) when a classical bit $W\in \{0,1\}$ is conveyed by encoding it into two states~$\rho_0,\rho_1$. The so-called Dobrushin coefficient
\begin{align}
\eta(\Phi)&=\sup_{\rho_0,\rho_1\in\cB(\cX)}
\frac{\|\Phi(\rho_0)-\Phi(\rho_1)\|_{1}}{\|\rho_0-\rho_1\|_{1}}\ ,
\end{align}
provides the upper bound
\begin{align}
\|\cF_n(\rho_0)-\cF_n(\rho_1)\|_1\leq \eta(\Phi)^n\ ,\qquad\textrm{ where }\qquad \cF_n=\Phi\circ\cE_{n}\circ\cdots\circ\cE_2\circ\Phi\circ\cE_1\ 
\end{align}
on the trace distance (defined by $\|A\|_1=\tr\sqrt{A A}$) between the output states. We refer to~\cite{hiairuskai} for a detailed discussion of the Dobrushin and other information measure based contraction coefficients for quantum channels.

One may ask how the maximum
output distinguishability
$\|\Phi(\rho_0)-\Phi(\rho_1)\|_1$
behaves as a function of the  distinguishability $\|\rho_0-\rho_1\|_1$ of the input states. The following lemma shows that this quantity is linear in~$\|\rho_0-\rho_1\|_1$, and thus not particularly exciting.
\begin{lem}\label{lem:dobrushinlinear}
Let $\delta\in [0,2]$. Then we have 
\begin{align}
\sup_{\substack{\rho_0,\rho_1\\
\|\rho_0-\rho_1\|\leq \delta
}} \|\Phi(\rho_0)-\Phi(\rho_1)\|_1&=(\delta/2)\cdot \eta(\Phi)\ .
\end{align}
\end{lem}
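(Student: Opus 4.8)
The plan is to collapse the two-variable optimization to a one-variable problem in the difference operator $\Delta := \rho_0 - \rho_1$ and then exploit positive homogeneity. Since $\Phi$ is linear, the objective satisfies $\|\Phi(\rho_0)-\Phi(\rho_1)\|_1 = \|\Phi(\Delta)\|_1$, so it depends on $(\rho_0,\rho_1)$ only through $\Delta$, and the constraint $\|\rho_0-\rho_1\|_1 \le \delta$ is likewise a constraint on $\Delta$ alone. Thus the left-hand side equals $\sup\{\,\|\Phi(\Delta)\|_1 : \Delta \in \mathcal{K}_\delta\,\}$, where $\mathcal{K}_\delta$ is the set of operators expressible as $\rho_0-\rho_1$ for density operators with $\|\rho_0-\rho_1\|_1 \le \delta$.

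First I would characterize $\mathcal{K}_\delta$ exactly. Every $\Delta \in \mathcal{K}_\delta$ is Hermitian, traceless (as $\tr\rho_0 = \tr\rho_1 = 1$), and satisfies $\|\Delta\|_1 \le \delta$. Conversely, given any traceless Hermitian $\Delta$ with $\|\Delta\|_1 \le \delta$, I would use its Jordan--Hahn decomposition $\Delta = \Delta_+ - \Delta_-$ into orthogonal positive and negative parts; tracelessness forces $\tr\Delta_+ = \tr\Delta_- = \tfrac12\|\Delta\|_1$. Setting $\rho_0 = \Delta_+ + \tau$ and $\rho_1 = \Delta_- + \tau$ for any $\tau \ge 0$ with $\tr\tau = 1 - \tfrac12\|\Delta\|_1$ yields valid density operators with $\rho_0 - \rho_1 = \Delta$. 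Hence $\mathcal{K}_\delta$ is precisely the trace-norm ball of radius $\delta$ inside the real subspace of traceless Hermitian operators. This is the step where the hypothesis $\delta \le 2$ is indispensable: it guarantees $\tr\tau \ge 0$, so the common positive-semidefinite ``filler'' $\tau$ exists; I expect this realizability argument to be the main structural obstacle.

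With $\mathcal{K}_\delta$ identified, I would invoke positive homogeneity of the map $\Delta \mapsto \|\Phi(\Delta)\|_1$, namely $\|\Phi(t\Delta)\|_1 = t\|\Phi(\Delta)\|_1$ for $t \ge 0$. Because $\mathcal{K}_\delta = \delta\cdot\mathcal{K}_1$ and the objective is nonnegative and homogeneous of degree one, the supremum over the radius-$\delta$ ball is $\delta$ times the supremum over the unit ball, and the latter is attained on the unit trace-norm sphere. The left-hand side therefore equals $\delta\cdot M$, where $M := \sup\{\,\|\Phi(\Delta)\|_1 : \Delta \text{ traceless Hermitian},\ \|\Delta\|_1 = 1\,\}$.

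It remains to identify $M$ with the Dobrushin coefficient. Applying the same reduction to $\Delta$ and using scale invariance of the ratio $\|\Phi(\rho_0)-\Phi(\rho_1)\|_1/\|\rho_0-\rho_1\|_1$, one sees that this ratio ranges exactly over the values $\|\Phi(\Delta)\|_1$ with $\Delta$ traceless Hermitian of unit trace norm, so $M$ and $\eta(\Phi)$ are proportional. The delicate point here is purely one of bookkeeping: one must track the factor of two relating the trace norm $\|\rho_0-\rho_1\|_1$ (which ranges up to $2$) to the distinguishability measure appearing in the definition of $\eta(\Phi)$. Keeping this normalization straight is what converts $\delta\cdot M$ into the claimed $\tfrac{\delta}{2}\cdot\eta(\Phi)$, and I would verify it against the extremal case of orthogonal pure inputs, where $\|\rho_0-\rho_1\|_1 = 2$ and the supremum in $\eta(\Phi)$ is saturated.
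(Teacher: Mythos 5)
Your argument is correct and, at its core, rests on the same mechanism as the paper's proof: the realizability of any traceless Hermitian $\Delta$ with $\|\Delta\|_1\le 2$ as a difference of states, via the Jordan--Hahn decomposition plus a positive ``filler'' of trace $1-\tfrac12\|\Delta\|_1$, is exactly the construction the paper performs with its auxiliary state $\sigma$. What you do differently is the packaging: the paper defines $f(\delta)=\sup_{\|\rho_0-\rho_1\|_1=\delta}\|\Phi(\rho_0)-\Phi(\rho_1)\|_1$, proves monotonicity and the ratio inequality $f(\delta')/\delta'\ge f(\delta)/\delta$ by explicitly rescaling an optimal pair, and then assembles the upper and lower bounds separately; you instead characterize the feasible set $\mathcal{K}_\delta$ of difference operators once and invoke positive homogeneity, collapsing both directions into the single identity $\mathcal{K}_\delta=\delta\cdot\mathcal{K}_1$, which is arguably cleaner. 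The one step you should not defer to ``bookkeeping'' is the identification of $M$ with $\eta(\Phi)$: with the ratio definition of $\eta(\Phi)$ displayed in the paper, your reduction gives $M=\eta(\Phi)$ exactly (the supremum of a degree-one homogeneous ratio equals its supremum over the unit sphere of achievable differences), and hence the value $\delta\cdot\eta(\Phi)$ rather than $(\delta/2)\cdot\eta(\Phi)$. The stated factor $\delta/2$ is consistent only with the convention $\eta(\Phi)=\sup_{\rho_0\perp\rho_1}\|\Phi(\rho_0)-\Phi(\rho_1)\|_1=f(2)$, which is what the paper's own proof silently adopts when it writes $f(2)=\eta(\Phi)$; so the factor of two you flag is a normalization discrepancy in the source, not something more careful tracking on your side can recover. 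Apart from pinning down that convention, your proof is complete.
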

\begin{proof}
Consider the function $f:[0,2]\rightarrow [0,2]$ defined by 
\begin{align} 
f(\delta):=\sup_{\substack{\rho_0,\rho_1\\
\|\rho_0-\rho_1\|_1= \delta
}} \|\Phi(\rho_0)-\Phi(\rho_1)\|_1\ .
\end{align}
We first show that~$f$ is monotonically increasing. Indeed, suppose that $\delta\leq \delta'$, and let $\rho_0,\rho_1$ be states such that $\|\rho_0-\rho_1\|_1=\delta$ and 
$f(\delta)=\|\Phi(\rho_0)-\Phi(\rho_1)\|_1$.  Let 
$\rho_0-\rho_1=A_+-A_-$ be the decomposition of the difference into positive and negative parts
(i.e., $A_+\geq 0$ and $A_-\geq 0$). Then we have $\tr(A_+)=\tr(A_-)=\delta/2$. Accordingly, let us define the states $\sigma_{\pm}=\frac{2}{\delta}A_{\pm}$. Note that $\sigma_+$ and $\sigma_-$ are orthogonal by definition. Choose an arbitrary state $\sigma$ and define
\begin{align}
\rho'_0&=\frac{\delta'}{2}\sigma_++\left(1-\frac{\delta'}{2}\right) \sigma\\
\rho'_1&=\frac{\delta'}{2}\sigma_-+\left(1-\frac{\delta'}{2}\right) \sigma\ .
\end{align}
 Then it is easy to check (using the orthogonality of $\sigma_+$ and $\sigma_-$) that 
 \begin{align}
 \|\rho'_0-\rho'_1\|_1=\delta'\ .
 \end{align}
 Furthermore we have 
 \begin{align}
 \|\Phi(\rho'_0)-\Phi(\rho'_1)\|_1&=\frac{\delta'}{\delta} \|\Phi(\rho_0)-\Phi(\rho_1)\|_1\\
 &\geq \|\Phi(\rho_0)-\Phi(\rho_1)\|_1=f(\delta)\qquad\textrm{ for }\delta'\geq \delta\ .
 \end{align}
 This shows that $f(\delta')\geq f(\delta)$, as claimed. In particular, we also obtain $f(2)=\eta(\Phi)$.
 
 More generally, the above proof shows that 
 \begin{align}
   \frac{f(\delta')}{\delta'} \geq \frac{f(\delta)}{\delta}\qquad\textrm{ for all }\delta'\geq \delta\ ,
 \end{align}
and thus with $\delta'=2$
\begin{align}
f(\delta)\leq \frac{\delta}{2} \cdot \eta(\Phi)\qquad\textrm{ for all }\delta\in [0,2]\ .\label{eq:upperboundphif}
\end{align}
Now suppose that $\rho_0,\rho_1$ are states such that $\|\rho_0-\rho_1\|_1=2$ and $\eta(\Phi)=\|\Phi(\rho_0)-\Phi(\rho_1)\|_1$. Then $\rho_0$ and $\rho_1$ are orthogonal, implying that (again for an arbitrary state~$\sigma$) the states
\begin{align}
\rho'_0&=\frac{\delta}{2}\rho_0+\left(1-\frac{\delta}{2}\right)\sigma\\
\rho'_1&=\frac{\delta}{2}\rho_1+\left(1-\frac{\delta}{2}\right)\sigma
\end{align}
satisfy $\|\rho'_0-\rho'_1\|_1=\delta$. Since we also have
\begin{align}
\|\Phi(\rho'_0)-\Phi(\rho'_1)\|_1&=\frac{\delta}{2}\|\Phi(\rho_0)-\Phi(\rho_1)\|_1\\
&=\frac{\delta}{2}\eta(\Phi)\ 
\end{align}
we conclude that
\begin{align}
f(\delta)\geq (\delta/2)\cdot\eta(\Phi)\ .
\end{align}
With Eq.~\eqref{eq:upperboundphif} and the monotonicity of~$f$, the claim follows.
\end{proof}

\subsection{Converse to power-constrained coding over cascades\label{sec:constrained}}
Consider a modified cascade coding problem, where a power constraint is introduced for each of the inputs $X_j$ to the channel~$\Phi$ in~\eqref{eq:markovchainx}, for $j=1,\ldots,n$. In other words, each relay~$\cE_j$ is required to have power-constrained outputs. In the case where $\Phi=P_{Y|X}$ is a classical channel with continuous variable input $X$ (i.e., a random variable on $\mathbb{R}^m$), a natural power constraint is of the form 
\begin{align}
\mathop{\mathbb{E}}\left[\|X_j\|_2^2\right]\leq E\qquad\textrm{ for all }\qquad j\in \{1,\ldots,n\}\ ,\label{eq:energyconstraintg}
\end{align}
where $E>0$ is some constant (determining the available power) and $\|x\|_2^2=\sum_{k=1}^m x_k^2$ for $x\in\mathbb{R}^m$. Let $\cG_E$ be the set of distributions $P_X$ on $\mathbb{R}^m$ satisfying~\eqref{eq:energyconstraintg}. To analyze this scenario, 
Polyanskiy and Wu~\cite{PolyanskiyWu16} defined the function
\begin{align}
F_E(\delta) &=\sup_{\substack{P_{X_0},P_{X_1}\in \cG_E\\
\|P_{X_0}-P_{X_1}\|_1\leq \delta}}
\|P_{Y|X}\circ P_{X_0}-P_{Y|X}\circ P_{X_1}\|_{1}\qquad\textrm{ for }\qquad \delta\in [0,2]\ ,\label{eq:dobrushincurve}
\end{align}
which they call the Dobrushin curve of~$P_{Y|X}$. 
 Remarkably, Polyanskiy and Wu were able to compute~\eqref{eq:dobrushincurve} for the additive white Gaussian noise (AWGN) channel
 using a coupling argument. They then use this function to establish bounds on the distance~$\|P_{Y_n|W=0}-P_{Y_n|W=1}\|_{1}$: inductively applying Definition~\eqref{eq:dobrushincurve}, one obtains
 \begin{align}
 \|P_{Y_n|W=0}-P_{Y_n|W=1}\|_1 & \leq F^{\circ n}(2)\ ,
 \end{align}
 where $F^{\circ n}=F\circ\cdots\circ F$ is the $n$-fold composition of~$F$ (see Fig.~\ref{fig:dobrushincurveapplied} for an illustration). It should  be noted that the Dobrushin coefficient~\eqref{eq:dobrushincoefficientclassical}
is not meaningful for the AWGN channel: it evaluates to~$1$ and does not provide converse bounds.

\begin{figure}[h]
\begin{center}
  \includegraphics[width=0.7\linewidth]{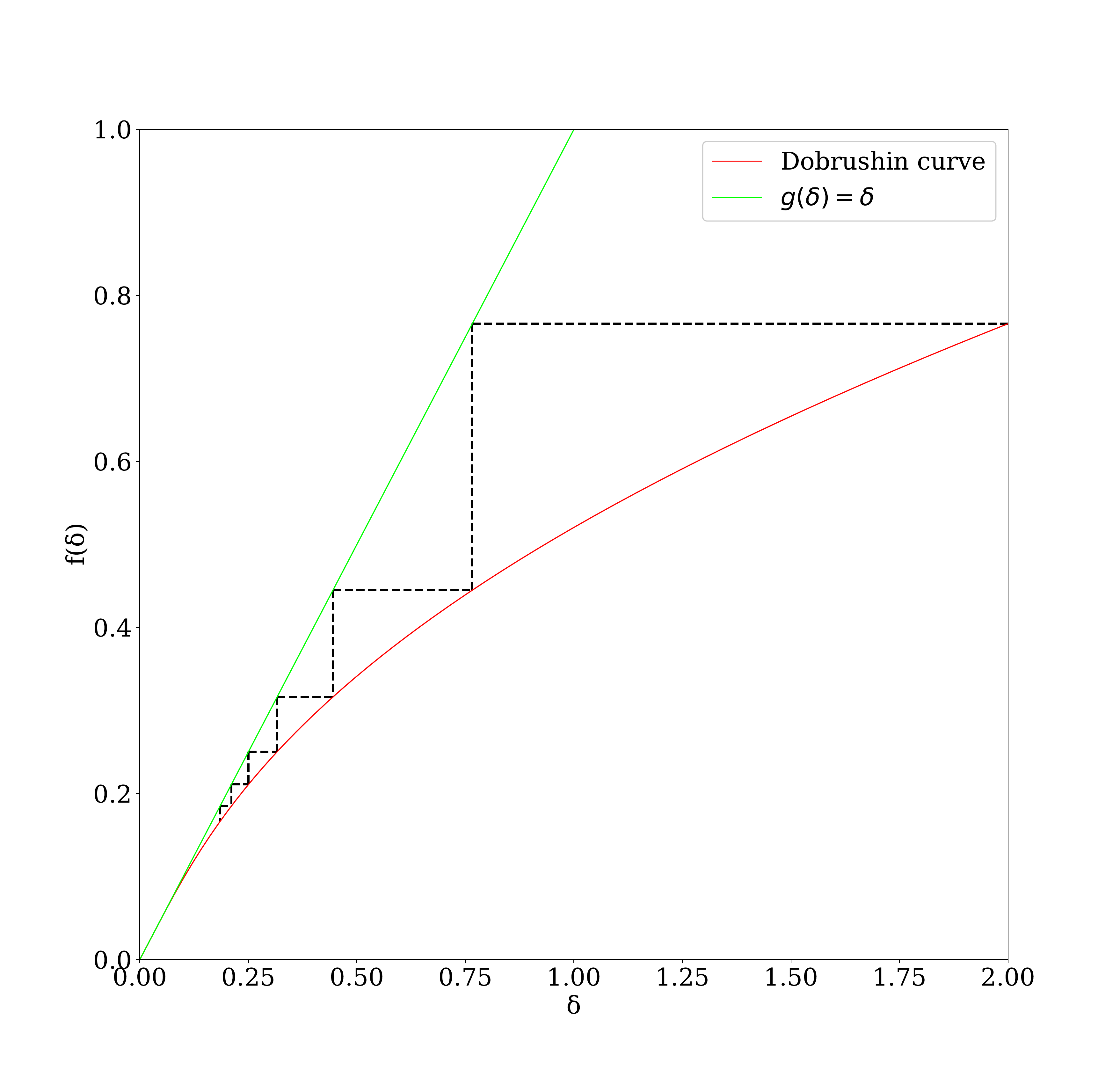}
\end{center}
\caption{Using the Dobrushin curve to obtain upper bounds on the information loss of a cascade of channels.\label{fig:dobrushincurveapplied}}
\end{figure}

Similar concepts are naturally defined for a quantum channel~$\Phi:\cB(\cX)\rightarrow\cB(\cY)$. In this case, a power constraint on states on $\cX$ can be defined by fixing a Hamiltonian $H$ (i.e., a self-adjoint operator) on $\cX$ and requiring that
the expected energy is less than a constant. For $E\in \mathbb{R}$,  let 
  \begin{align}
  \cG_E=\{\rho\in \cB(\cH)\ |\ \rho\geq 0, \tr \rho=1\textrm{ and } \tr(\rho H)\leq E\}\ 
  \end{align} 
  be the set of states satisfying this energy constraint. 
  We can then define\,---\,in analogy with~\eqref{eq:dobrushincurve}\,---\,the function
  \begin{align}
F_E(\delta) &=\sup_{\substack{\rho_0,\rho_1\in \cG_E\\
\|\rho_0-\rho_1\|_1\leq \delta}}
\|\Phi(\rho_0)-\Phi(\rho_1)\|_{1}\qquad\textrm{ for }\qquad \delta\in [0,2]\ .\label{eq:dobrushincurvequantum}
\end{align}
Contrary to the unconstrained case discussed in Lemma~\ref{lem:dobrushinlinear}, the function~\eqref{eq:dobrushincurvequantum} is not linear in~$\delta$, and its evaluation appears to be challenging in general.

\subsection{The Dobrushin curve as a semidefinite bilinear program\label{sec:dobrushinasbilinear}}
In this section, we show that the energy-constrained Dobrushin curve for 
 finite-dimensional quantum channels can be cast 
 as a semidefinite bilinear program of the form~\eqref{eq:sdpproblem}. This allows us to numerically compute the curve by applying our algorithm.

\begin{lem}
Consider a CPTPM $\Phi:\cB(\mathbb{C}^d)\rightarrow\cB(\mathbb{C}^d)$. Then we have
\begin{align}
  F_E(\delta)=\delta \max_{(P,Q,R,S)\in \Gamma(E,\delta)} \tr(P\Phi(R - S))\ ,\label{eq:FEoptimizationproblem}
\end{align}
where $\Gamma(E,\delta)$ is the set of quadruples $(P,Q,R,S)\in \cB(\mathbb{C}^d)^{\times 4}$ satisfying
\begin{align}
Q&\geq 0\ ,\;\tr(Q)=1\ ,\quad\textrm{ and } \tr(HQ)\leq E\ ,\label{eq:firstqhzero}\\
Q+\frac{\delta}{2}(R-S)&\geq 0\ ,\label{eq:nonnegativityQR}\\
\tr(H(Q+\frac{\delta}{2}(R-S))&\leq E\ ,\label{eq:energyboundlincomb}\\
0&\leq R\qquad\textrm{and }\qquad \tr(R)=1\ ,\label{eq:rcondition}\\
0&\leq S\qquad\textrm{and }\qquad \tr(S)=1\ ,\label{eq:scondition}\\
0&\leq P \leq I\ .\label{eq:pbounds}
\end{align}
\label{lem:dobrushin}
\end{lem}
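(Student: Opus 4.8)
The plan is to prove the two inequalities $F_E(\delta)\ge \delta\,\eta$ and $F_E(\delta)\le \delta\,\eta$ separately, where $\eta:=\max_{(P,Q,R,S)\in\Gamma(E,\delta)}\tr(P\Phi(R-S))$. The single tool that makes both directions transparent is the variational formula for the trace norm of a \emph{traceless} Hermitian operator~$M$: writing $O=2P-I$, the standard identity $\|M\|_1=\max_{-I\le O\le I}\tr(OM)$ becomes $\|M\|_1=2\max_{0\le P\le I}\tr(PM)$ because the $\tr(M)$ term drops out. Since $\Phi$ is trace-preserving and every difference $\rho_0-\rho_1$ of states (and $R-S$ of states) is traceless, both $\Phi(\rho_0-\rho_1)$ and $\Phi(R-S)$ are traceless, so this identity applies throughout and produces exactly the operator $P$ with $0\le P\le I$ appearing in $\Gamma(E,\delta)$.

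For the lower bound $F_E(\delta)\ge\delta\,\eta$, I would take any feasible quadruple $(P,Q,R,S)\in\Gamma(E,\delta)$ and set $\rho_1:=Q$ and $\rho_0:=Q+\tfrac{\delta}{2}(R-S)$. The constraints defining $\Gamma(E,\delta)$ are tailored precisely so that $\rho_0,\rho_1$ are feasible for $F_E(\delta)$: equation~\eqref{eq:firstqhzero} gives $\rho_1\in\cG_E$; nonnegativity \eqref{eq:nonnegativityQR} and the energy bound \eqref{eq:energyboundlincomb}, together with $\tr\rho_0=\tr Q+\tfrac{\delta}{2}(\tr R-\tr S)=1$ (using $\tr R=\tr S=1$), give $\rho_0\in\cG_E$; and since $R,S$ are states, the triangle inequality yields $\|\rho_0-\rho_1\|_1=\tfrac{\delta}{2}\|R-S\|_1\le\delta$. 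Then $\|\Phi(\rho_0)-\Phi(\rho_1)\|_1=\tfrac{\delta}{2}\|\Phi(R-S)\|_1\ge \delta\,\tr(P\Phi(R-S))$, where the inequality is the elementary bound $2\tr(PM)\le\|M\|_1$ valid for every $0\le P\le I$ and traceless $M$. Since $\rho_0,\rho_1$ are feasible this gives $\delta\,\tr(P\Phi(R-S))\le F_E(\delta)$, and taking the maximum over $\Gamma(E,\delta)$ yields the claim.

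For the upper bound $F_E(\delta)\le\delta\,\eta$, I would start from arbitrary feasible $\rho_0,\rho_1\in\cG_E$ with $t:=\tfrac12\|\rho_0-\rho_1\|_1\le\delta/2$ and recover a quadruple. Let $\rho_0-\rho_1=A_+-A_-$ be the Jordan decomposition, so $A_\pm\ge0$ are orthogonal with $\tr A_+=\tr A_-=t$; set $Q:=\rho_1$, choose $P$ to attain $\|\Phi(\rho_0-\rho_1)\|_1=2\tr(P\Phi(\rho_0-\rho_1))$, and define the states $R:=s\,\tfrac{A_+}{t}+(1-s)\sigma$ and $S:=s\,\tfrac{A_-}{t}+(1-s)\sigma$ with $s:=2t/\delta\le1$ and $\sigma$ an \emph{arbitrary} state. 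By construction $\tfrac{\delta}{2}(R-S)=A_+-A_-=\rho_0-\rho_1$, so $Q+\tfrac{\delta}{2}(R-S)=\rho_0$; feasibility of the quadruple in $\Gamma(E,\delta)$ then follows from feasibility of $\rho_0,\rho_1$, and one computes $\|\Phi(\rho_0)-\Phi(\rho_1)\|_1=2\tr(P\Phi(\rho_0-\rho_1))=\delta\,\tr(P\Phi(R-S))\le\delta\,\eta$. Taking the supremum over $\rho_0,\rho_1$ finishes the argument; compactness of $\Gamma(E,\delta)$ justifies writing $\max$ rather than $\sup$.

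The step requiring the most care is the reconstruction in the upper bound when the distinguishability constraint is slack, i.e.\ when $\|\rho_0-\rho_1\|_1<\delta$: here $\tfrac{\delta}{2}(R-S)$ must reproduce $\rho_0-\rho_1$ exactly while keeping $R,S$ of unit trace, which is why the normalized positive and negative parts are diluted by the arbitrary padding state $\sigma$ with weight $1-s$. I would also dispatch the edge cases $\delta=0$ and $\rho_0=\rho_1$ (where $s=0$ and the objective vanishes), for which both sides are zero and one may simply take $R=S=\sigma$.
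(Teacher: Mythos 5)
Your proposal is correct and follows essentially the same route as the paper: the same substitution $\rho_1=Q$, $\rho_0=Q+\tfrac{\delta}{2}(R-S)$ for one inequality, and the Jordan decomposition $\rho_0-\rho_1=A_+-A_-$ together with the variational characterization of the trace norm for the other. Your padding construction $R=s\,A_+/t+(1-s)\sigma$ is in fact a slightly more careful treatment of the case $\|\rho_0-\rho_1\|_1<\delta$, where the paper's proof implicitly takes $\tr(A_\pm)=\delta/2$ (valid only when the distance constraint is tight); this is a worthwhile refinement but not a different approach.
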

To see more explicitly that the optimization problem~\eqref{eq:FEoptimizationproblem} 
is a semidefinite bilinear program, define the flip operator $\mathbb{F} = \sum_{i,j=1}^d \ket{ij} \bra{ji}$,
where $\{\ket{i}\}_{i=1}^d$ is an orthonormal basis of~$\mathbb{C}^d$. Using the identity $\tr(\mathbb{F}(A\otimes B))=\tr(AB)$ for all $A,B\in\cB(\mathbb{C}^d)$, we have
\begin{align}
\tr (P \Phi(R-S))&=\tr(\mathbb{F} (P\otimes \Phi(R)))-\tr(\mathbb{F} (P\otimes \Phi(S)))\nonumber\\
&=\tr\left((I\otimes\Phi^*)(\mathbb{F})(P\otimes R)\right)-\tr\left((I\otimes\Phi^*)(\mathbb{F})(P\otimes S)\right)\ ,\label{eq:linearprogramdobrex}
\end{align}
where $\Phi^*$ is the adjoint channel (with respect to the Hilbert-Schmidt inner product) i.e., it is defined by $\tr(A\Phi(B))=\tr(\Phi^*(A)B)$ for all $A,B\in\cB(\cH)$. This matches the form of~\eqref{eq:sdpproblem} with $X=P$, $Y=R\oplus S$, and 
$Q=\hat{Q}\oplus 0-Q_1\oplus 0\oplus Q_3$ where $Q_1\oplus Q_3=\hat{Q}$, and $\hat{Q}=(I\otimes \Phi^*)(\mathbb{F})$. 
\begin{proof}
For convenience, let $\Theta(E,\delta)$ denote the set of pairs $(\rho_0,\rho_1)$ of states satisfying
\begin{align}
\tr(H\rho_j)\leq E\qquad\textrm{ for }j=1,2\qquad\textrm{ and }\qquad \|\rho_0-\rho_1\|_1\leq \delta\ .\label{eq:Thetaedelta}
\end{align}
Suppose $(P,Q,R,S)\in \Gamma(E,\delta)$.
Set 
\begin{align}
\rho_1=Q\qquad\textrm{ and }\qquad \rho_0=Q+\frac{\delta}{2}(R-S)\ .
\end{align}
Because of \eqref{eq:firstqhzero}, $\rho_1$ is a state and satisfies the energy constraint, i.e., $\rho_1\in \cG_E$.
Similarly, $\rho_0$ is a state since it has unit trace because of~\eqref{eq:firstqhzero},~\eqref{eq:rcondition}, and~\eqref{eq:scondition}, and because it is non-negative by~\eqref{eq:nonnegativityQR}. By  Eq.~\eqref{eq:energyboundlincomb}, it also belongs to $\cG_E$.
Now observe that
\begin{align}
\|\rho_0-\rho_1\|_1&=\frac{\delta}{2}\|R-S\|_1\leq \delta\ ,
\end{align}
since both $R$ and $S$ are states (cf.~Eqs.~\eqref{eq:rcondition} and~\eqref{eq:scondition}) and hence $\|R-S\|_1\leq 2$. This shows that $(\rho_0,\rho_1)\in \Theta(E,\delta)$.  Furthermore, we have
\begin{align}
\delta\max_{0\leq P\leq I}\tr(P\Phi(R-S))&=\left\|\Phi\left(\frac{\delta}{2}(R-S)\right)\right\|_1=\|\Phi(\rho_0)-\Phi(\rho_1)\|_1\ .\label{eq:identityphirhozeroone}
\end{align}
We conclude that $F_E(\delta)\geq \delta\max_{(P,Q,R,S)\in \Gamma(E,\delta)}\tr(P\Phi(R-S))$.

To show the converse inequality, assume that $(\rho_0,\rho_1)\in \Theta(E,\delta)$. Then 
\begin{align}
\rho_0-\rho_1=A_+-A_-\label{eq:differenceexpansion}
\end{align}
for two orthogonal nonnegative operators $A_+,A_-$ satisfying
\begin{align}
\tr(A_+)=\tr(A_-)=\frac{\delta}{2}\ .
\end{align}
Set $Q = \rho_1$, $R = \frac{2}{\delta}A_+$, $S=\frac{2}{\delta}A_-$ and
\begin{align}
P&=\argmax_{0\leq P\leq I} \tr(P (\Phi(\rho_0)-\Phi(\rho_1))) \ .
\end{align}
Clearly, the quadruple $(P,Q,R,S)$ satisfies Eqs.~\eqref{eq:firstqhzero}, ~\eqref{eq:rcondition},~\eqref{eq:scondition} and \eqref{eq:pbounds}. 
It remains to check~\eqref{eq:nonnegativityQR}
and~\eqref{eq:energyboundlincomb}. Observe that by definition, we have  
\begin{align}
Q+\frac{\delta}{2}(R-S)&=\rho_1+A_+-A_- =\rho_0\ .
\end{align}
This implies that
\eqref{eq:nonnegativityQR}
and~\eqref{eq:energyboundlincomb} are also satisfied.

Since the identity~\eqref{eq:identityphirhozeroone} also holds by definition of
$(P,Q,R,S)$, we find 
\begin{align}
F_E(\delta) \leq \delta
\max_{(P,Q,R,S)\in\Gamma(E,\delta)}\tr(P\Phi(R-S)) \ .
\end{align}
 This concludes the proof.
\end{proof}
We note that the statement of Lemma~\ref{lem:dobrushin} simplifies somewhat in the case where the map $\Phi:\cB(\mathbb{C}^2)\rightarrow\cB(\mathbb{C}^2)$ is a qubit channel. This is because the operators
$A_{\pm}$ in Eq.~\eqref{eq:differenceexpansion} are orthogonal, and hence proportional to rank-$1$-projections $\proj{\varphi_{\pm}}$ which satisfy $\proj{\varphi_+}+\proj{\varphi_-}=I$. Here $I$ is the identity operator on $\mathbb{C}^2$. In particular, this means that we can eliminate $S=I-R$. Retracing the proof of Lemma~\ref{lem:dobrushin}, we obtain the following.

\begin{cor}
Consider a qubit channel  $\Phi:\cB(\mathbb{C}^2)\rightarrow\cB(\mathbb{C}^2)$. Then 
\begin{align}
  F_E(\delta)=\delta \max_{(P,Q,R)\in \Gamma(E,\delta)} \tr(P\Phi(2R-I))\ ,
\end{align}
where $\Gamma(E,\delta)$ is the set of triples $(P,Q,R)\in \cB(\mathbb{C}^2)^{\times 3}$ satisfying
\begin{align}
\tr(Q)&=1\ ,\; Q\geq 0\ ,\;\textrm{ and } \tr(HQ)\leq E\label{eq:firstqhzeroqub}\ ,\\
Q+\frac{\delta}{2}(2 R-I)&\geq 0\ ,\label{eq:nonnegativityQRqub}\\
\tr(H(Q+\frac{\delta}{2}(2R-I))&\leq E\ ,\label{eq:energyboundlincombqub}\\
\tr(R)&=1\ ,\label{eq:rnormalization}\\
0&\leq R\ ,\label{eq:Rlowerupperbound}\\
0&\leq P \leq I\ .\label{eq:pbounds2}
\end{align}
\label{cor:dobrushinqubit}
\end{cor}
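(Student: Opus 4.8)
The plan is to specialize the already-proved Lemma~\ref{lem:dobrushin} to the qubit case, where the extra structure of $\mathbb{C}^2$ lets me eliminate one of the four optimization variables. The key observation is the one already noted in the remark preceding the corollary: in dimension $2$, the two orthogonal nonnegative operators $A_+,A_-$ appearing in the decomposition~\eqref{eq:differenceexpansion} of $\rho_0-\rho_1$ are each supported on a one-dimensional subspace, so after normalization the states $R=\tfrac{2}{\delta}A_+$ and $S=\tfrac{2}{\delta}A_-$ are rank-one projections onto orthogonal lines. Hence $R+S=\proj{\varphi_+}+\proj{\varphi_-}=I$, which gives the relation $S=I-R$. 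Substituting this into $R-S=R-(I-R)=2R-I$ immediately turns the objective $\tr(P\Phi(R-S))$ into $\tr(P\Phi(2R-I))$, matching the claimed form.

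First I would retrace the forward direction of the proof of Lemma~\ref{lem:dobrushin}: given a triple $(P,Q,R)\in\Gamma(E,\delta)$ satisfying~\eqref{eq:firstqhzeroqub}--\eqref{eq:pbounds2}, I set $S:=I-R$ and verify that $(P,Q,R,S)$ lies in the four-variable feasible set $\Gamma(E,\delta)$ of Lemma~\ref{lem:dobrushin}. The conditions on $Q$ and $P$ are identical, and~\eqref{eq:nonnegativityQRqub},~\eqref{eq:energyboundlincombqub} are exactly~\eqref{eq:nonnegativityQR},~\eqref{eq:energyboundlincomb} after the substitution $R-S=2R-I$. The only points needing a line of checking are that $S=I-R$ is itself a valid state: nonnegativity $S\geq 0$ follows from $R\leq I$ (which holds since $R\geq 0$ and $\tr R=1$ force the eigenvalues of $R$ into $[0,1]$ in dimension two), and $\tr S=\tr(I)-\tr(R)=2-1=1$ recovers~\eqref{eq:scondition}. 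This shows the qubit feasible set injects into the general one with matching objective value, giving one inequality.

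For the converse direction I would run the reverse argument from Lemma~\ref{lem:dobrushin}: starting from an arbitrary optimal pair $(\rho_0,\rho_1)\in\Theta(E,\delta)$, the decomposition~\eqref{eq:differenceexpansion} produces $A_+,A_-$ with $\tr(A_\pm)=\delta/2$, and the crucial qubit-specific fact is that orthogonality of $A_+$ and $A_-$ together with their rank being at most one forces their normalized supports to span all of $\mathbb{C}^2$, so $\proj{\varphi_+}+\proj{\varphi_-}=I$. Consequently the $S$ constructed in that proof automatically equals $I-R$, so the four-variable optimizer already lives in the lower-dimensional slice; restricting the maximization to triples $(P,Q,R)$ with the implicit $S=I-R$ therefore loses nothing. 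Combining both inequalities yields the stated identity.

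The main obstacle is genuinely the rank argument rather than the bookkeeping: I must justify carefully that $A_+,A_-$ are each at most rank one and that their ranges are orthogonal complements, which is where dimension $2$ is essential. If either $A_+$ or $A_-$ vanishes (i.e. $\rho_0-\rho_1$ is semidefinite of one sign) the spanning claim $\proj{\varphi_+}+\proj{\varphi_-}=I$ fails as stated, so I would need to confirm that the boundary case $\delta=0$ is handled separately or absorbed into the supremum by a limiting argument. Once that technicality is dispatched, everything else is a direct transcription of the proof of Lemma~\ref{lem:dobrushin} with the single substitution $S=I-R$, so I expect the corollary to follow with only a short paragraph of additional verification.
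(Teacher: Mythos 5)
Your proposal is correct and follows essentially the same route as the paper, which likewise obtains the corollary by noting that in dimension two the orthogonal positive parts $A_\pm$ of the traceless difference $\rho_0-\rho_1$ are proportional to rank-one projections summing to the identity, eliminating $S=I-R$, and retracing the proof of Lemma~\ref{lem:dobrushin}. Your additional checks (that $R\geq 0$, $\tr R=1$ force $R\leq I$ in dimension two, and the degenerate case $\rho_0=\rho_1$) are sound and only make explicit what the paper leaves implicit.
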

One can furthermore add the condition $\tr(P) = 1$ as we know that the optimal $P$ is rank 1 and satisfies this condition.
Again we may recast this in the form of~\eqref{eq:sdpproblem} using the fact that (in analogy to~\eqref{eq:linearprogramdobrex})
\begin{align}
  \tr (P \Phi(2R-I))&=2\tr\left((I\otimes\Phi^*)(\mathbb{F})(P\otimes R)\right)- \tr\left(P \Phi(I)\right)\ .
\end{align}
In this case, we obtain both a bilinear term as well as a term which is linear in $P$.

\subsection{Numerical computation of Dobrushin curves\label{sec:numerics}}

According to Lemma~\ref{lem:dobrushin} (respectively Corollary~\ref{cor:dobrushinqubit}), we can use our biconvex programming algorithm to calculate Dobrushin curves for quantum channels.

For concreteness, we consider qubit channels. Let $\{ \sigma_j \}_{j=1}^3$ be the Pauli matrices
  \begin{align}
    \sigma_1 = \begin{pmatrix}0 & 1 \\ 1 & 0 \end{pmatrix},\qquad \sigma_2 = \begin{pmatrix}0 & -i \\ i & 0 \end{pmatrix},\qquad \sigma_3 = \begin{pmatrix}1 & 0 \\ 0 & -1 \end{pmatrix}\ .
  \end{align}
  A state $\rho$ (i.e. a non-negative operator with unit trace) can be represented as
  \begin{equation}
    \rho = \frac12 \left(I + \sum_{k=1}^3 w_k \sigma_k \right)\label{eq:embedding}
  \end{equation}
  with $w \in \mathbb{R}^3$ satisfying $\|w\|_2\leq 1$. The vector $w$ is called the Bloch vector of the state $\rho$.
  We remark that Eq.~\eqref{eq:embedding} provides an isometric identification of the set of states on~$\mathbb{C}^2$ with trace-norm, and the unit ball (with respect to the Euclidean norm) in~$\mathbb{R}^3$, see e.g., \cite[Chapter 9]{NielsenChuang}.
  
Without loss of generality, we will assume that the Hamiltonian under consideration is 
\begin{align}
H&=\sigma_3\ . \label{eq:HamiltonianH}
\end{align}  
In other words, we will be interested in states~$\rho$ having Bloch vectors $w=(w_1,w_2,w_3)$ satisfying an inequality of the form~$w_3\leq E$.
  
\subsubsection{Example: the dephasing channel\label{sec:dephasingchannel}}
As a first example consider a dephasing channel
$\Phi:\;\cB(\mathbb{C}^2) \rightarrow \cB(\mathbb{C}^2)$. For $a \in [0,1]$, this acts as
\begin{align}
  \Phi\left(\alpha_0 I + \sum_{k=1}^3 \alpha_k \sigma_k \right) &= \alpha_0 I + a\left(\alpha_1\sigma_1 + \alpha_2 \sigma_2\right) + \alpha_3 \sigma_3\qquad\textrm{ for all }\alpha\in\mathbb{R}^4\ .
\label{eq:dephasingdef}
\end{align}
The dephasing channel~\eqref{eq:dephasingdef} has the invariance property
\begin{align}
  \Phi\left( e^{i\theta \sigma_3} \rho e^{-i\theta \sigma_3} \right) = e^{i\theta \sigma_3} \Phi(\rho) e^{-i\theta\sigma_3}\qquad\textrm{ for all }\theta\in [0,2\pi) \textrm{ and }\rho\in\cB(\mathbb{C}^2)\  .\label{eq:invariancephi}
\end{align}
The Hamiltonian~\eqref{eq:HamiltonianH} is also invariant under rotations around the $\sigma_3$-axis, i.e., we have 
\begin{align}
H&=e^{-i\theta\sigma_3}He^{i\theta \sigma_3}\qquad\textrm{ for all }\theta\in [0,2\pi)\ .\label{eq:hamiltonianinva}
\end{align}
Eq.~\eqref{eq:hamiltonianinva}  implies that the set~$\cG_E$ of energy-constrained states 
is closed under the family of maps $\rho \mapsto e^{i\theta \sigma_3} \rho e^{-i\theta\sigma_3}$. By the invariance property~\eqref{eq:invariancephi} and the unitary invariance of the trace norm, we conclude that applying the joint rotations 
\begin{align}
(\rho_0,\rho_1) &\mapsto (e^{i\theta \sigma_3}\rho_0e^{-i\theta\sigma_3},e^{i\theta\sigma_3}\rho_1e^{-i\theta\sigma_3})\qquad\textrm{ for any }\theta\in [0,2\pi)
\end{align}
to a pair $(\rho_0,\rho_1)$ of states leaves their energies as well as the distances $\|\rho_0-\rho_1\|_1$ and  $\|\Phi(\rho_0)-\Phi(\rho_1)\|_1$ invariant. Because $\rho\mapsto e^{i\theta}\rho e^{-i\theta}$
amounts to the map
\begin{align}
(w_1,w_2,w_3) & \mapsto ((\cos 2\theta) w_1-(\sin 2\theta) w_2,(\sin 2\theta) w_1+(\cos 2\theta)w_2,w_3)
\end{align}
on the level of Bloch vectors $w=(w_1,w_2,w_3)\in\mathbb{R}^3$, we conclude the following: for any fixed energy~$E$, there is a pair of states $(\rho_0,\rho_1)\in \Theta(E,\delta)$ (see Eq.~\eqref{eq:Thetaedelta}) such that 
\begin{align}
F_E(\delta)&=\|\Phi(\rho_0)-\Phi(\rho_1)\|_1\ ,
\end{align}
(i.e., the states achieve the optimum in the definition of the Dobrushin curve), and such that the Bloch vector~$w$ of $\rho_1$ lies in the $(\sigma_1,\sigma_3)$-plane, i.e., 
\begin{align}
w_2&=0\ .\label{eq:wtwocondition}
\end{align}
In the semidefinite bilinear program introduced in Corollary~\ref{cor:dobrushinqubit} (where $Q$ corresponds to~$\rho_1$), this means that we may add the constraint
\begin{align}
\tr(\sigma_2 Q) = 0\label{eq:dobrushinadditionalconstraint}
\end{align}
without changing the value of the optimization problem.

In Figs.~\ref{fig:dobrushindephasing} and~\ref{fig:dobrushincurveapplieddephasing}, we show numerically computed Dobrushin curves for dephasing channels. These are applied by using the formulation as a semidefinite bilinear program (see Corollary~\ref{cor:dobrushinqubit}), and imposing the constraint~\eqref{eq:dobrushinadditionalconstraint}. 

  \begin{figure}[h]
    \subfloat[The Dobrushin curve for the dephasing channel $\Phi$ for $a=0.3$ The algorithm required an average of $88$ elements (worst case: $178$ elements) in the partition $\cP$ to reach the desired precision.][The Dobrushin curve for the dephasing channel\\ $\Phi$ for $a=0.3$. The algorithm required an average of\\ $88$ elements (worst case: $178$ elements) in the partition\\ $\cP$ to reach the desired precision.\label{fig:dobrushincurve}]
  {\includegraphics[width=0.5\linewidth]{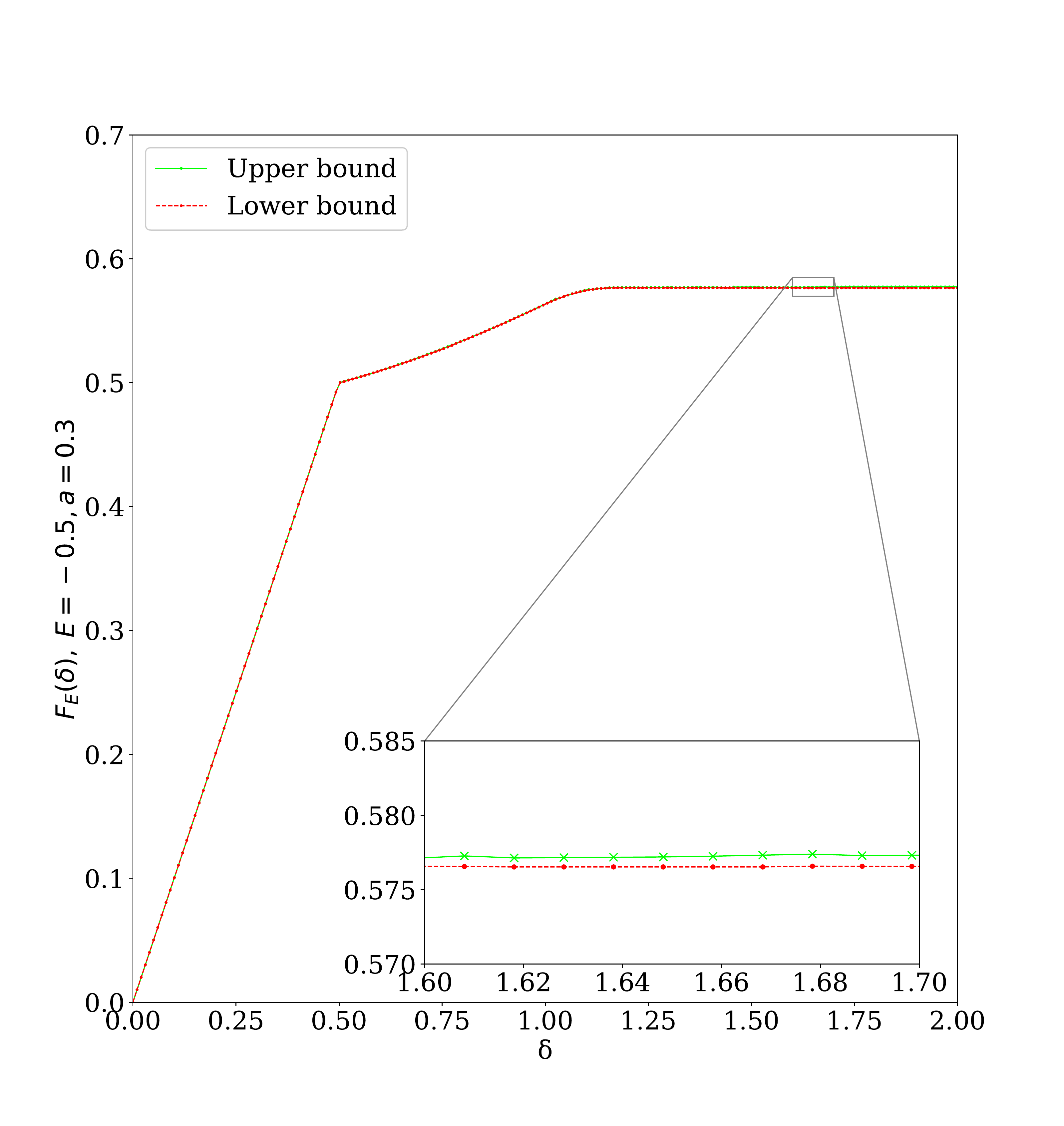}}
  \subfloat[The Dobrushin curve for the dephasing channel $\Phi$ for $a=0.5$. The algorithm required an average of $92$ elements (worst case: $142$ elements) in the partition $\cP$ to reach the desired precision.\label{fig:dobrushincurvetwo}]{\includegraphics[width=0.5\linewidth]{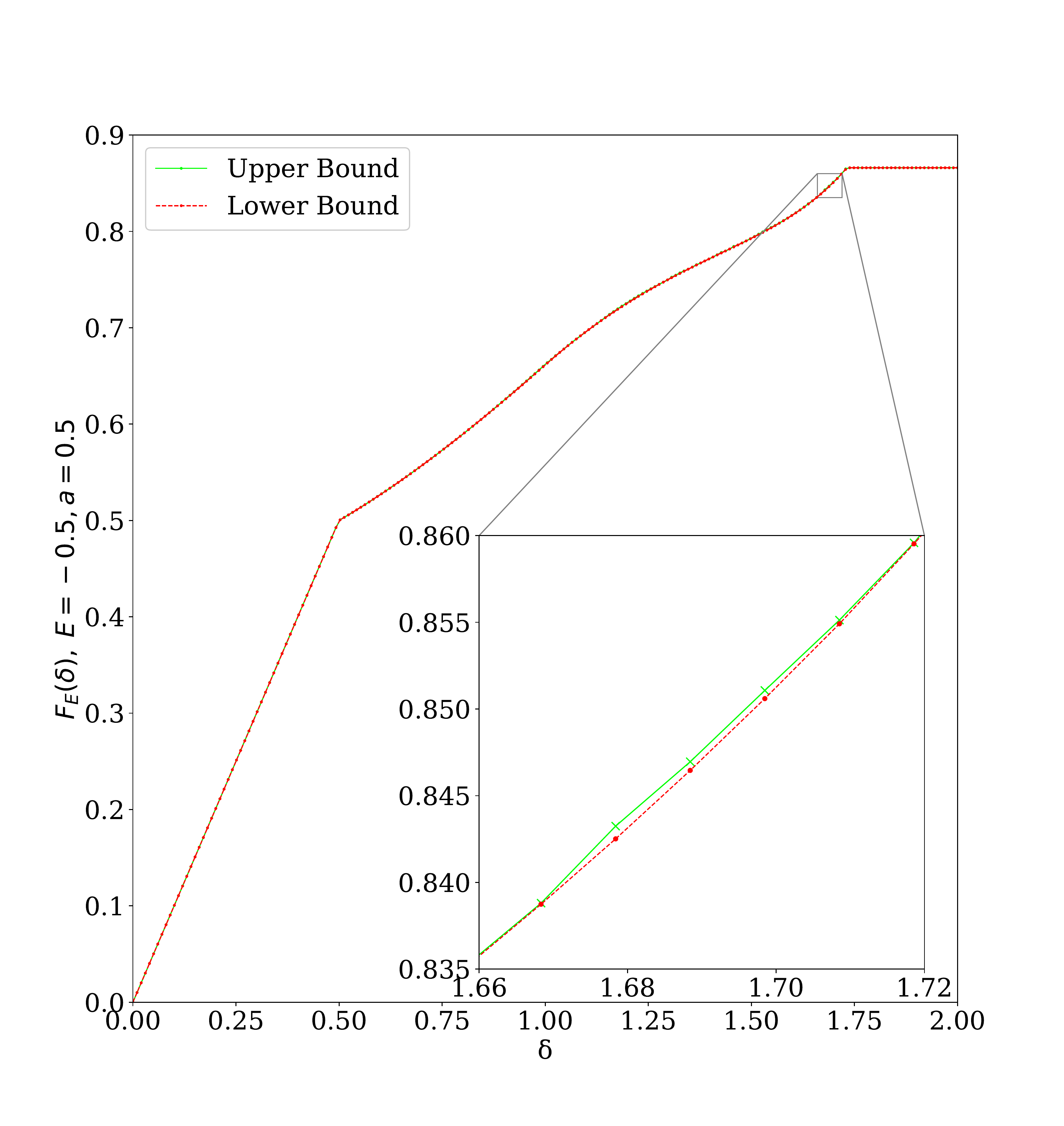}} 
  \caption{The Dobrushin curves of the dephasing channels $\Phi$ for two different values of $a$, for $E = -0.5$. For $200$ values of $\delta \in [0,2]$, the algorithm was run with a desired precision of $\epsilon = 10^{-3}$.\label{fig:dobrushindephasing}}
\end{figure}

\begin{figure}[t]
  \subfloat[The procedure to obtain upper bounds on the output distinguishability of a cascade of dephasing channels.][The procedure to obtain upper bounds on the\\ output distinguishability of a cascade of channels. \label{fig:dobrushin_loss_procedure_dephasing}]
  {\includegraphics[width=0.5\linewidth]{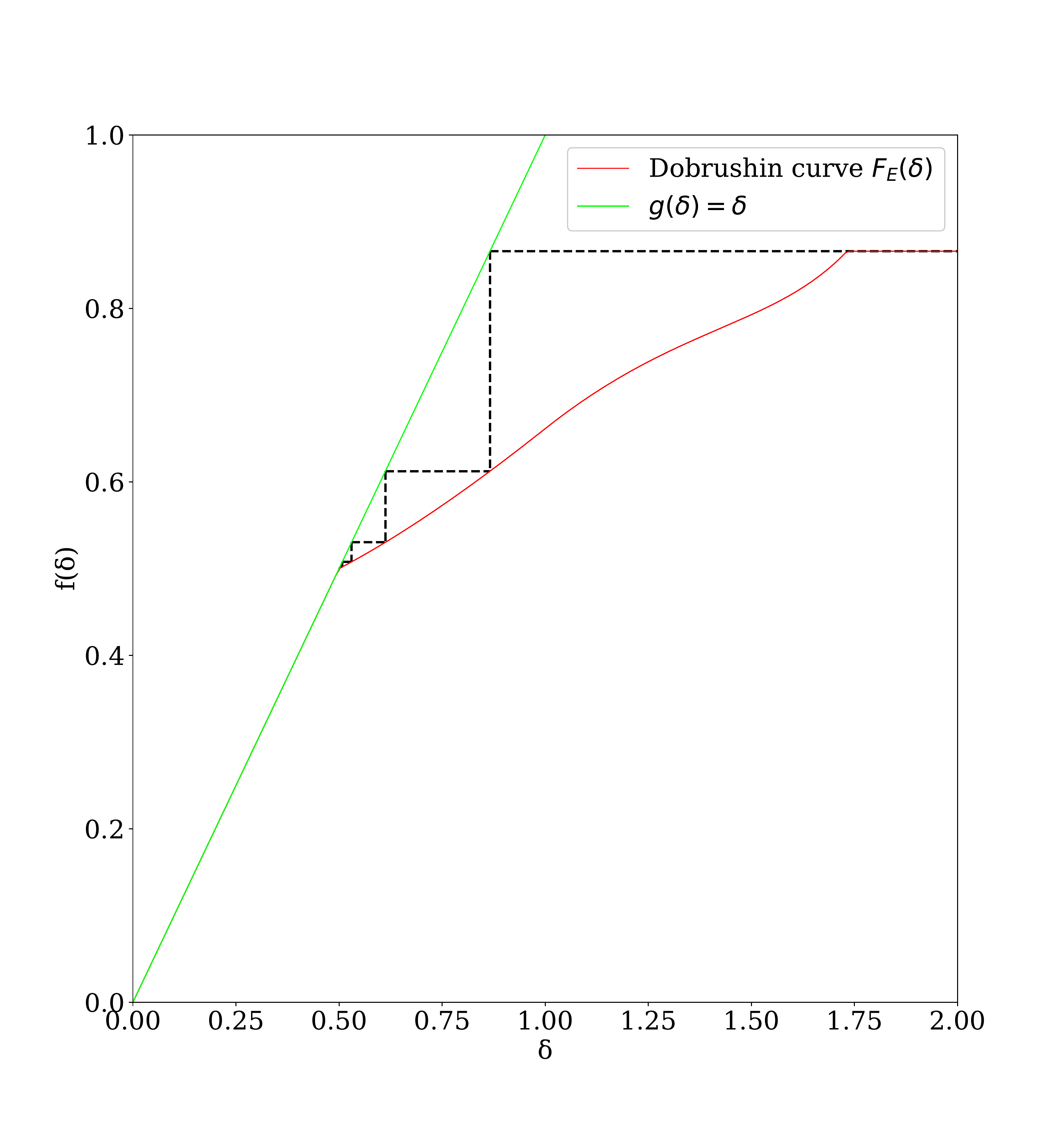}}
  \subfloat[The maximal output distinguishability for a cascade of phase-damping channels ($E = -0.5$), as a function of the length of the cascade.\label{fig:dobrushin_loss}]
  {\includegraphics[width=0.5\linewidth]{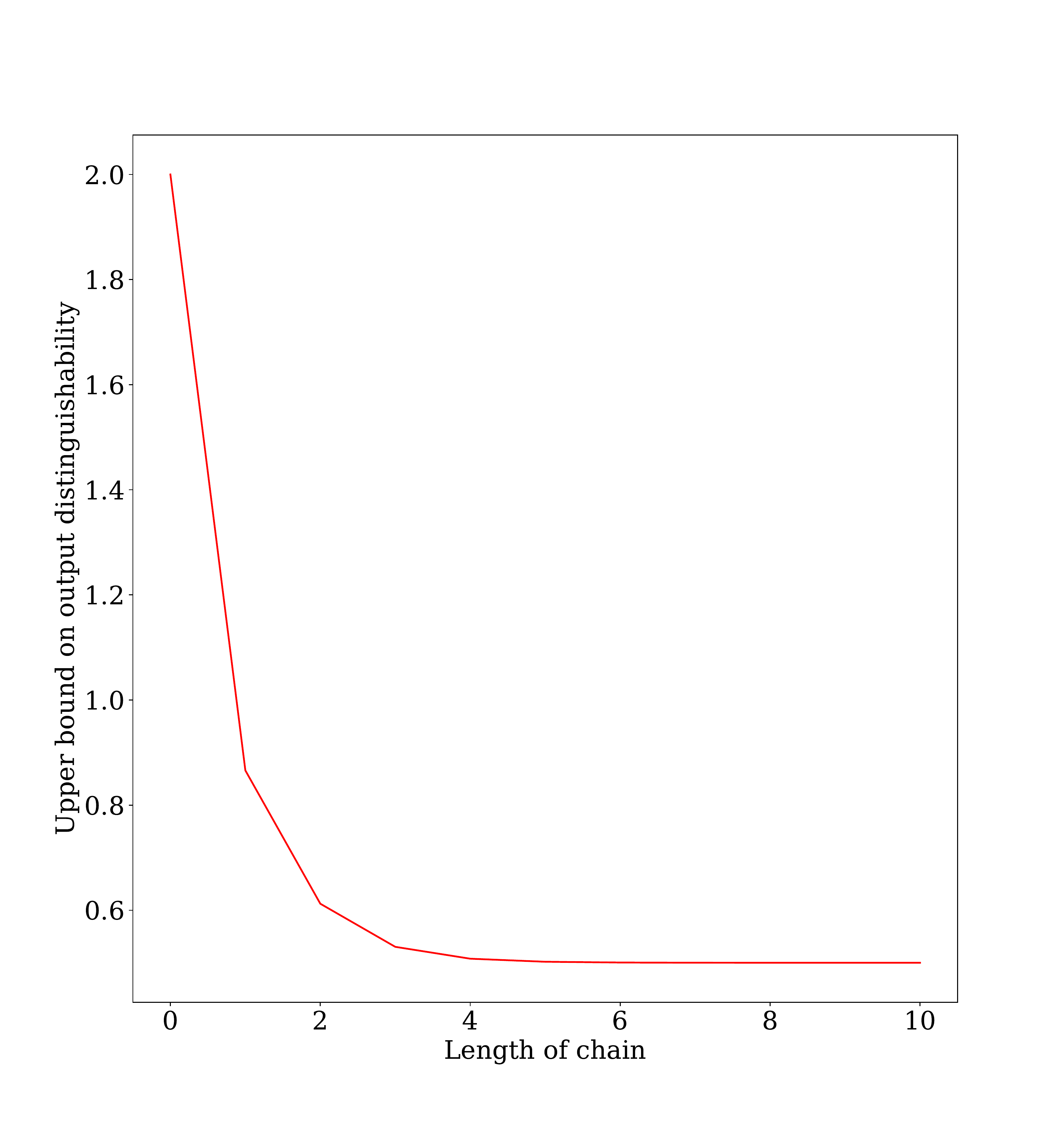}}
  \caption{Using the Dobrushin curve to obtain upper bounds on the information loss of a cascade of dephasing channels $\Phi$, for $E = -0.5$ and $a = 0.5$.\label{fig:dobrushincurveapplieddephasing}}
\end{figure}

In Fig.~\ref{fig:dobrushincurveconvergence}, we present numerical data  illustrating the importance of exploiting continuous symmetries by a constraint as in Eq.~\eqref{eq:dobrushinadditionalconstraint}. It is well-known that branch-and-bound algorithms perform badly in the context of such symmetries, hence it is important to include such constraints. Note that more generic channels as discussed in Section~\ref{sec:genericqubitdobrushin} typically do not exhibit such continuous symmetries. 
\begin{figure}
  \subfloat[Convergence of the algorithm with the symmetry constraint~\eqref{eq:dobrushinadditionalconstraint}][Convergence of the algorithm with the symmetry\\ constraint~\eqref{eq:dobrushinadditionalconstraint}\label{fig:dobrushincurveconvergencewith}]{\includegraphics[width=0.5\linewidth]{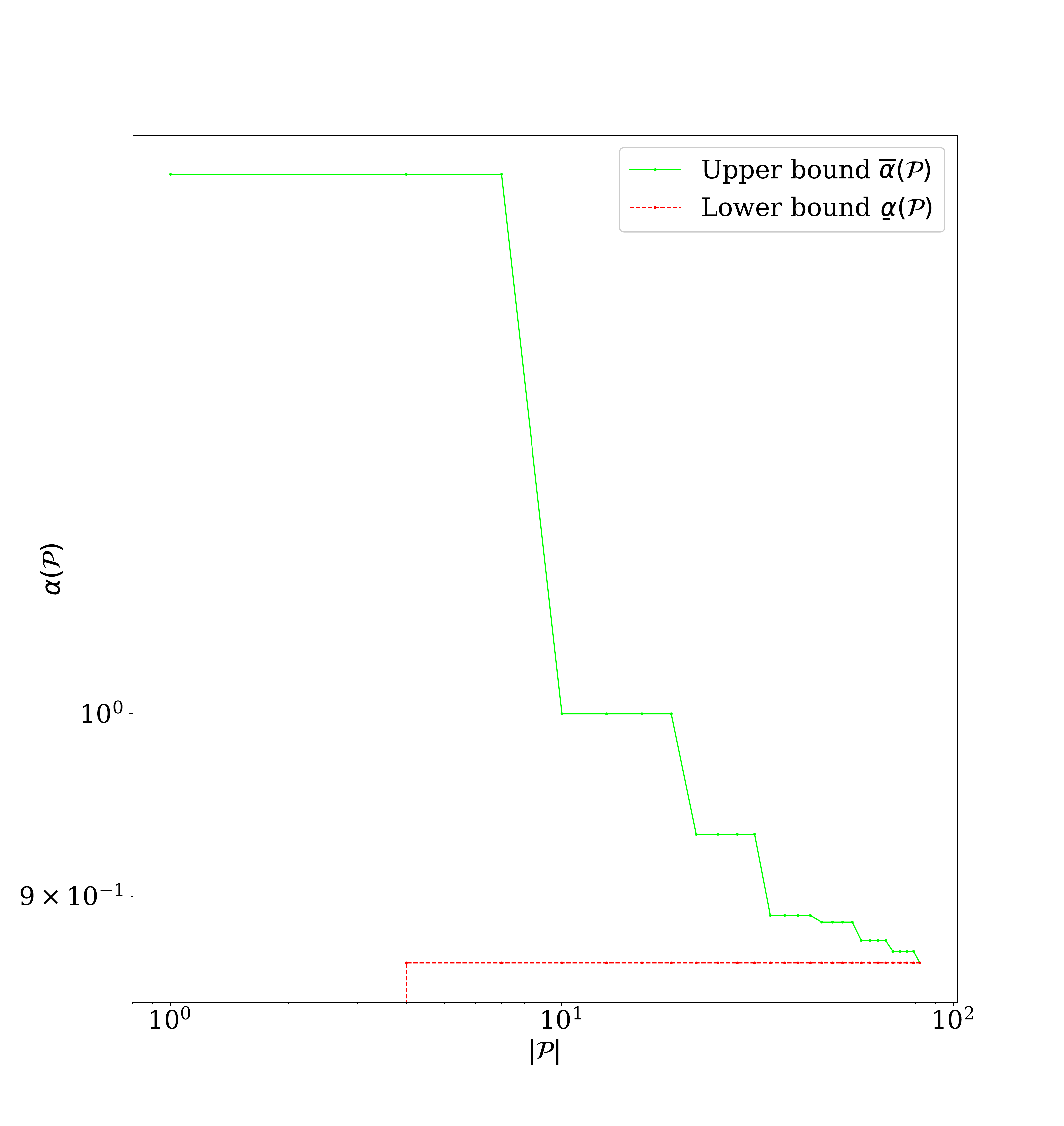}}
  \subfloat[Convergence of the algorithm without the symmetry constraint~\eqref{eq:dobrushinadditionalconstraint} \label{fig:dobrushincurveconvergencewithout}]{\includegraphics[width=0.5\linewidth]{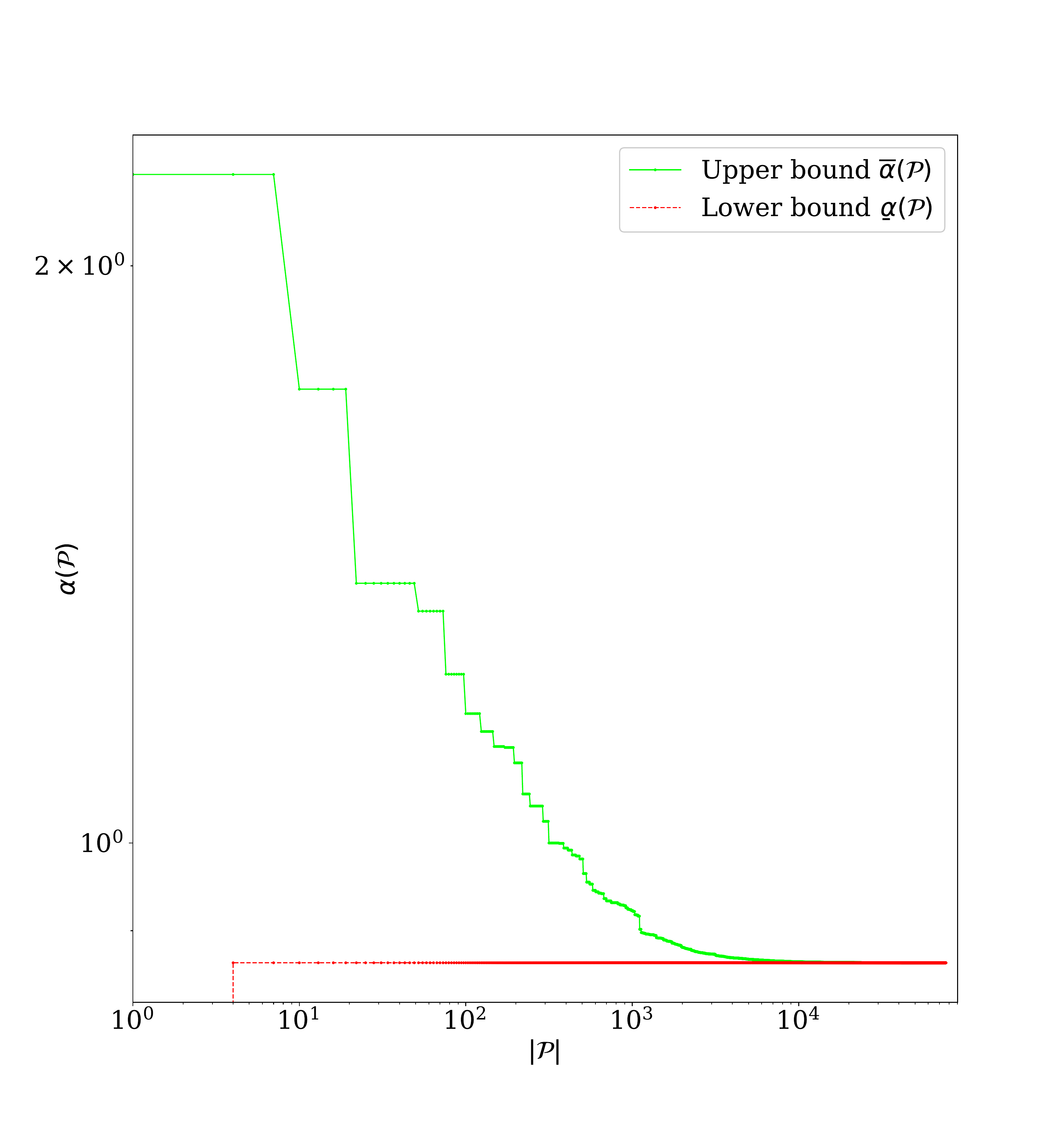}}
  \caption{Upper and lower bounds $\overline{\alpha}(\cP)$, $\underline{\alpha}(\cP)$ as a function of the size~$|\cP|$ of the partition. This provides a measure for the speed of convergence of the algorithm. The figures are for $E=-0.5$, $a=0.5$, $\delta = 2$, and $\epsilon = 10^{-5}$.
  \label{fig:dobrushincurveconvergence}}
\end{figure}

\subsubsection{Discussion of the dephasing channel}
We note that our algorithm also provides\,---\,in addition to the value $F_E(\delta)$ of the Dobrushin curve\,---\,a pair of states $(\rho_0,\rho_1)\in \Theta(E,\delta)$ satisfying $F_E(\delta)=\|\Phi(\rho_0)-\Phi(\rho_1)\|_1$. We call such a pair of states optimal for the Dobrushin curve. In the special case of the dephasing channel defined in Eq.~\eqref{eq:dephasingdef} for some $a \in [0,1]$, we can  provide the following description of such pairs (valid for instance for Fig.~\ref{fig:dobrushincurvetwo}, i.e., for $a=0.5$ and $E=-0.5$). We note that this description is  based on a heuristic geometric analysis of the problem. However, our numerical data shows that the following pairs of states are indeed optimal. We also note that\,---\,while a full analytical proof of optimality may in principle be constructed for the dephasing channel, such a brute-force calculation is unlikely to be achievable e.g., for generic qubit or qutrit channels, where symmetry arguments are not applicable and positivity constraints are particularly difficult to deal with.  

 Recall from Eq.~\eqref{eq:wtwocondition} that we can assume without loss of generality that one of the states~$(\rho_0,\rho_1)$\,---\,say $\rho_0$ for concreteness\,---\,has Bloch vector lying in the plane orthogonal to~$(0,1,0)$. It turns out that~$\rho_1$ can also be chosen to lie in this plane. 
 In~Fig.~\ref{fig:dobrushinoptimalcodestates}, we show the projection of
 the Bloch sphere onto this plane, and illustrate a choice of optimal code states (in terms of their Bloch vectors). 
\begin{figure}
\begin{center}
  \subfloat[The regime $0\leq \delta\leq 1-|E|$.  
   \label{fig:computingblochdephasingzero}]
  {\includegraphics[width=0.32\linewidth]{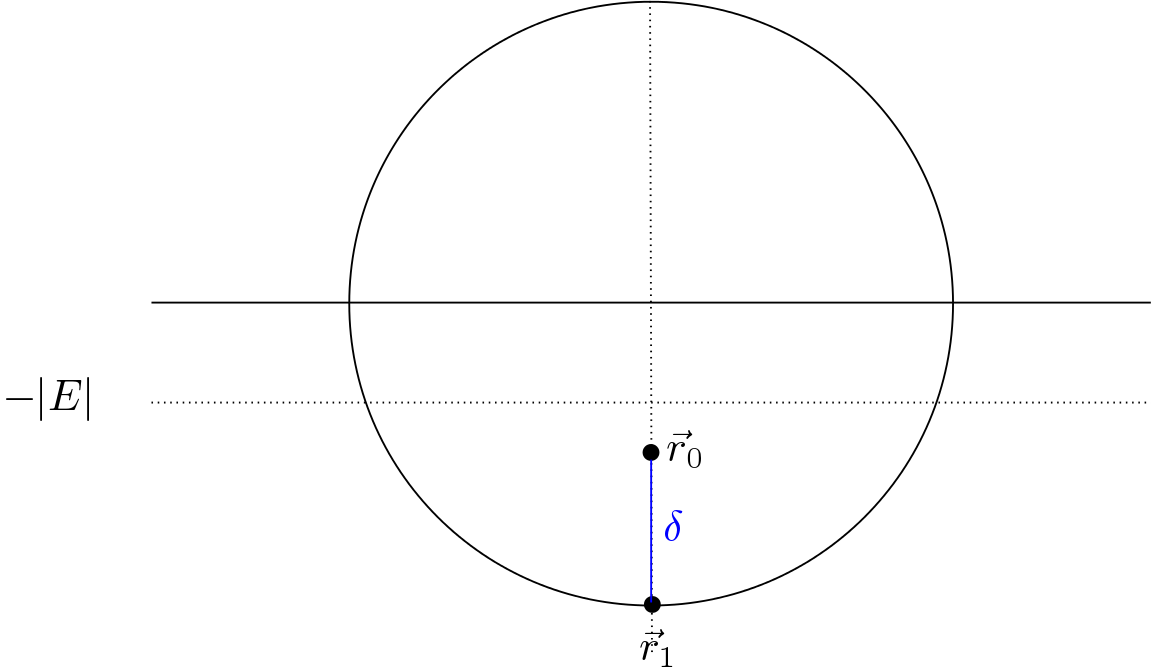}}
  \subfloat[The regime $1-|E| \leq \delta \leq \sqrt{2(1-|E|)}$.][The regime\\ $1-|E| \leq \delta \leq \sqrt{2(1-|E|)}$.\label{fig:computingblochdephasingone}]
  {\includegraphics[width=0.32\linewidth]{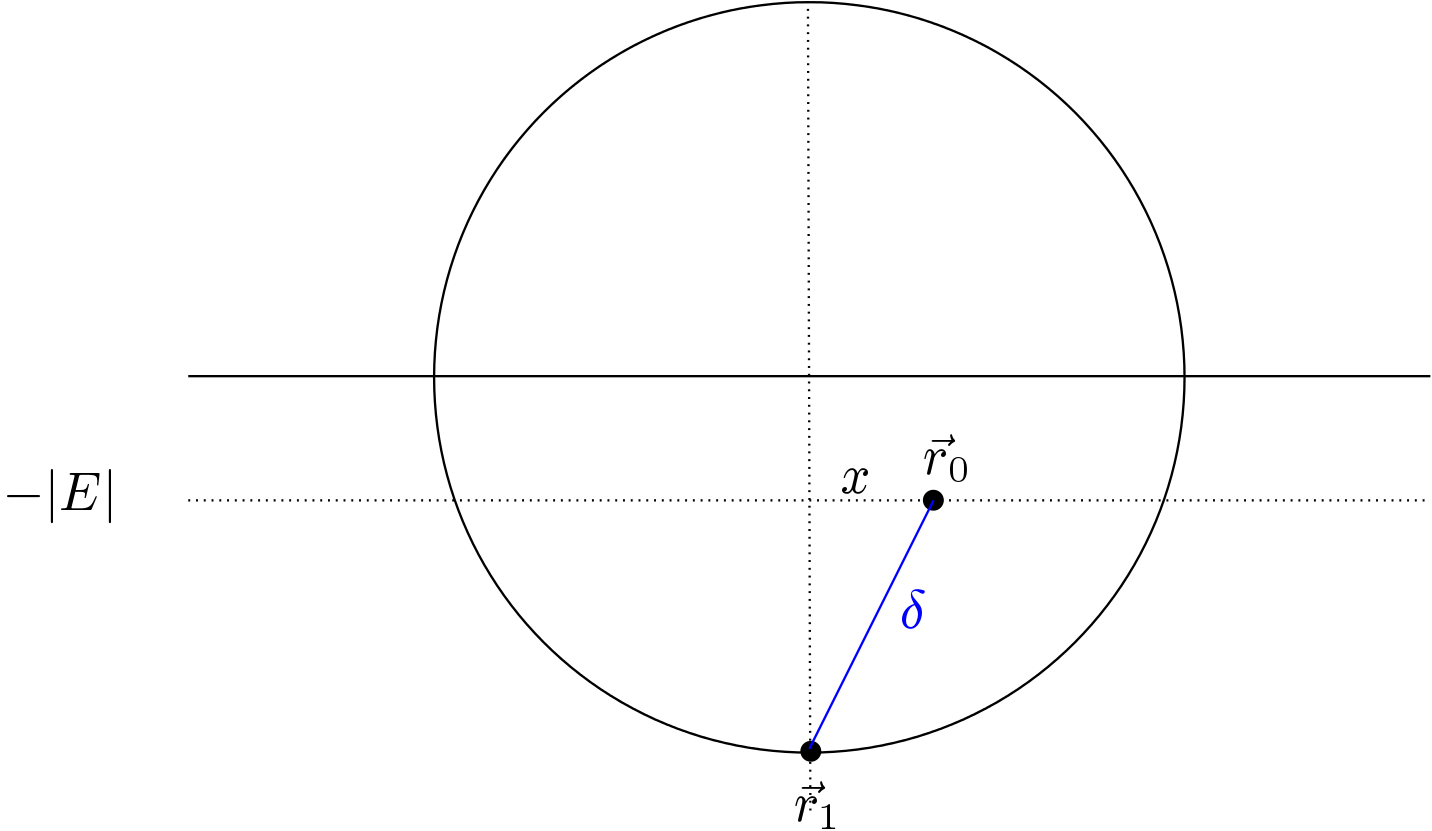}}
  \subfloat[The regime $\sqrt{2(1-|E|)} \leq \delta \leq 2\sqrt{1-|E|^2}$][The regime\\ $\sqrt{2(1-|E|)} \leq \delta \leq 2\sqrt{1-|E|^2}$.\label{fig:computingblochdephasingtwo}]
    {\includegraphics[width=0.32\linewidth]{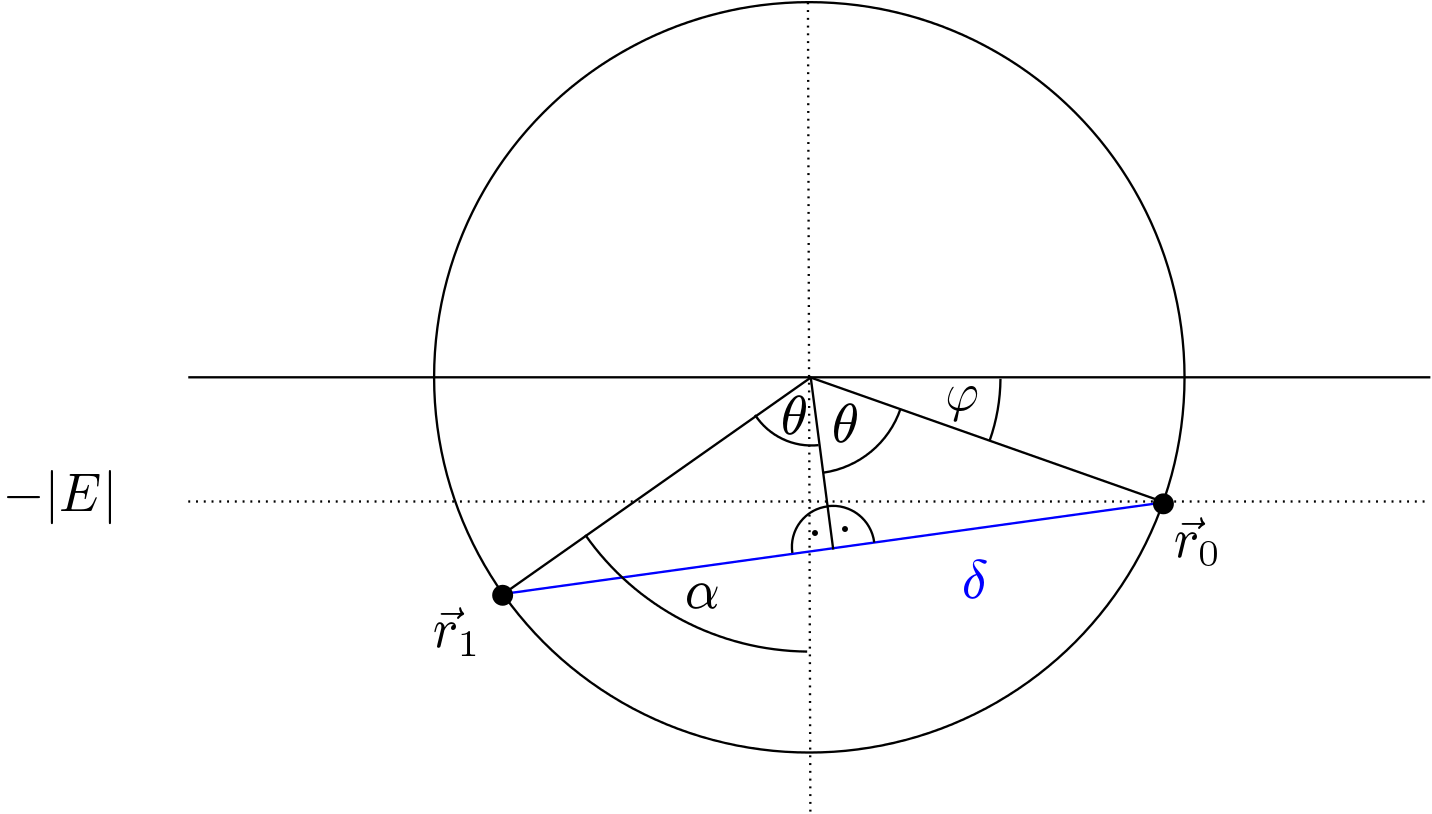}}
    \end{center}
    \caption{Bloch vectors $\vec{r}_0$ and $\vec{r}_1$ of a pair of optimal  states for the Dobrushin curve of the dephasing channel.\label{fig:dobrushinoptimalcodestates}}
\end{figure}
More precisely, we identify three regimes:
\begin{description}
\item[Regime I:] for $\delta\in [0,1-|E|]$, an optimal pair $(\rho_0,\rho_1)=\Theta(E,\delta)$
  is given by the pure state $\rho_1~=~\proj{1}$ with Bloch vector~$\vec{r}_1=(0,0,1)$ and a mixed state~$\rho_0$ whose Bloch vector~$\vec{r}=(0,0,1-\delta)$ also lies on the $e_3$-axis.see Fig.~\ref{fig:computingblochdephasingone}
\item[Regime II:] for $1-|E|\leq \sqrt{2(1-|E|)}$, 
an optimal pair is given by $\rho_1=\proj{1}$ as in Regime II and $\rho_0$ 
having Bloch vector~$(x,0,-|E|)$, with $x$ chosen such that $\|\rho_0-\rho_1\|_1=\delta$,
see Fig.~\ref{fig:computingblochdephasingtwo}.
\item[Regime III:] for $\sqrt{2(1-|E|)}\leq \delta\leq 2\sqrt{1-|E|^2}$,
we can choose the state~$\rho_0$ to have Bloch vector given by the ``eastern'' point of intersection of the projection of the Bloch sphere onto the plane orthogonal to $(0,1,0)$, and the plane $(x,y,-|E|)$. On the other hand, $\rho_1$ a pure state at distance~$\delta$, see Fig.~\ref{fig:dobrushinoptimalcodestates}.
\end{description}
This completes the description, as there are no pairs of states at distance~$\delta>2\sqrt{1-|E|^2}$ which belong to the energy-constrained subset~$\cG_E$. We now analyze this ``coding'' strategy for the dephasing channel and show the following:

\begin{lem}
We have $(\rho_0,\rho_1)\in\Theta(E,\delta)$ in all three regimes. In particular, these pairs of states give the following lower bound on the Dobrushin curve of the dephasing channel:
\begin{align}
f_E(\delta)&\geq \begin{cases}
\delta \qquad &\textrm{ if } \delta<1-|E|\\
g_E(\delta) &\textrm{ if } 1-|E|\leq \delta\leq \sqrt{2(1-|E|)}\\
h_E(\delta)&\textrm{ if } \sqrt{2(1-|E|)}\leq \delta\leq 2\sqrt{1-|E|^2}\\
2a \sqrt{1-|E|^2} &\textrm{ if }\delta>2\sqrt{1-|E|^2}\ 
\end{cases}\label{eq:lowerboundx}
\end{align}
where
\begin{align}
g_E(\delta)&:=\sqrt{a^2(\delta^2-(1-|E|)^2)+(1-|E|)^2}\\
h_E(\delta)&:=\Bigl[\left(|E|+\cos\left(2\arccos(\delta/2)+\arccos(|E|)\right)\right)^2\\
&\quad +a^2\left(\sqrt{1-|E|^2}+\sin\left(2\arccos(\delta/2)+\arccos(|E|)\right)\right)^2\Bigr]^{1/2}\ .
\end{align}
\end{lem}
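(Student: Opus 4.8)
The plan is to pass to the Bloch-sphere picture, in which every quantity appearing in the lemma becomes elementary planar geometry, and then to \emph{exhibit} the three families of pairs and evaluate the objective directly; since $F_E(\delta)$ is defined as a supremum, no optimality argument is needed. Throughout I take $E=-|E|\le 0$, so that the energy-feasible set is the lower spherical cap $\{\vec r:\|\vec r\|_2\le 1,\ r_3\le E\}$. Recall from Eq.~\eqref{eq:embedding} that a qubit state $\rho$ corresponds isometrically to its Bloch vector $\vec r\in\R^3$ with $\|\vec r\|_2\le 1$, that $\|\rho_0-\rho_1\|_1=\|\vec r_0-\vec r_1\|_2$, and that the energy constraint $\tr(H\rho)\le E$ with $H=\sigma_3$ reads $r_3\le E$. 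Writing $\vec d=\vec r_0-\vec r_1$, the dephasing action $\vec r\mapsto(a r_1,a r_2,r_3)$ from Eq.~\eqref{eq:dephasingdef} gives $\|\Phi(\rho_0)-\Phi(\rho_1)\|_1=\sqrt{a^2(d_1^2+d_2^2)+d_3^2}$. By the symmetry reduction~\eqref{eq:wtwocondition} I may place both states in the plane $r_2=0$, so $d_2=0$ and the objective collapses to $\sqrt{a^2 d_1^2+d_3^2}$. It therefore suffices, in each regime, to (i)~verify the proposed pair lies in $\Theta(E,\delta)$ (Eq.~\eqref{eq:Thetaedelta}), i.e.\ both Bloch vectors lie in the unit ball, both satisfy $r_3\le E$, and $\|\vec d\|_2\le\delta$; and (ii)~evaluate $\sqrt{a^2 d_1^2+d_3^2}$ and match it to the claimed expression.

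First I would dispatch Regimes~I and~II, which reduce to right-triangle trigonometry. In Regime~I, take $\rho_1$ at the pole $\vec r_1=(0,0,-1)$ and $\rho_0=(0,0,-1+\delta)$ on the same axis: feasibility of $\rho_0$ is the condition $-1+\delta\le E$, i.e.\ $\delta\le 1-|E|$, the distance is exactly $\delta$, and $d_1=0$ forces the objective to be $|d_3|=\delta$. In Regime~II, keep $\rho_1=(0,0,-1)$ and slide $\rho_0=(x,0,E)$ along the energy-boundary line $r_3=E$, choosing $x=\sqrt{\delta^2-(1-|E|)^2}$ so that $\|\vec d\|_2=\delta$; here $d_3=E-(-1)=1-|E|$. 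Reality of $x$ gives the lower endpoint $\delta\ge 1-|E|$, and the constraint $\|\vec r_0\|_2\le 1$, i.e.\ $x^2+E^2\le 1$, gives precisely the upper endpoint $\delta\le\sqrt{2(1-|E|)}$. Substituting $d_1=x$ and $d_3=1-|E|$ into the objective yields $\sqrt{a^2(\delta^2-(1-|E|)^2)+(1-|E|)^2}=g_E(\delta)$.

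The substantive case is Regime~III. Here I would fix $\rho_0$ at the eastern intersection $\vec r_0=(\sqrt{1-|E|^2},0,E)$ of the unit circle with the line $r_3=E$, and parametrize $\rho_1$ as the pure state $\vec r_1=(-\sin\omega,0,\cos\omega)$ with $\omega=2\arccos(\delta/2)+\arccos|E|$. The key step is to confirm $\|\vec d\|_2=\delta$: using $\|\vec r_0-\vec r_1\|_2^2=2-2\cos\psi$ for unit vectors separated by angle $\psi$, one checks that the angular separation here is $\psi=\pi+2\arccos(\delta/2)$, and the double-angle identity $\cos(2\arccos(\delta/2))=\tfrac{\delta^2}{2}-1$ then gives $\cos\psi=1-\tfrac{\delta^2}{2}$ and hence $\|\vec d\|_2=\delta$. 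One then verifies that the energy constraint $\cos\omega\le E$ holds throughout $\sqrt{2(1-|E|)}\le\delta\le 2\sqrt{1-|E|^2}$, and that the endpoints glue: at $\delta=\sqrt{2(1-|E|)}$ one finds $\omega=\pi$, recovering the Regime~II pole $\rho_1=(0,0,-1)$, while at $\delta=2\sqrt{1-|E|^2}$ one finds $\cos\omega=E$, returning $\rho_1$ to the energy boundary. Reading off $d_1=\sqrt{1-|E|^2}+\sin\omega$ and $d_3=E-\cos\omega=-(|E|+\cos\omega)$ and inserting them into $\sqrt{a^2 d_1^2+d_3^2}$ produces exactly $h_E(\delta)$. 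Finally, for $\delta>2\sqrt{1-|E|^2}$ I take the antipodal boundary pair $(\pm\sqrt{1-|E|^2},0,E)$, whose separation $2\sqrt{1-|E|^2}\le\delta$ is admissible and whose objective is $2a\sqrt{1-|E|^2}$; as this separation is the diameter of the energy-constrained region, no larger distance is feasible, which is also why the bound becomes constant.

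I expect the only genuine obstacle to be the trigonometric bookkeeping in Regime~III: choosing the correct branch for $\rho_1$ on the unit circle so that it simultaneously sits at chord-distance $\delta$ from the fixed $\rho_0$ and satisfies $r_3\le E$, and then collapsing the double-angle expressions into the stated closed form $h_E$. This is elementary (law of cosines together with double-angle formulas) but sign-sensitive. It bears repeating that the lemma claims only a \emph{lower} bound on $F_E(\delta)$, so no matching upper bound or optimality proof is needed: each proposed pair is simply a feasible point of the supremum defining $F_E(\delta)$, and the three evaluations above furnish the four cases of~\eqref{eq:lowerboundx}.
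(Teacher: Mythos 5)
Your proposal is correct and follows essentially the same route as the paper: exhibit explicit feasible pairs in the Bloch-sphere picture for each regime, use the isometry $\|\rho_0-\rho_1\|_1=\|\vec r_0-\vec r_1\|_2$ and the contraction $(d_1,d_2,d_3)\mapsto(ad_1,ad_2,d_3)$, and evaluate the output distance directly, with Regime III handled by the same inscribed-angle/chord-length trigonometry (your angle $\omega=2\arccos(\delta/2)+\arccos|E|$ is the paper's $2\theta+\varphi-\pi/2$ after co-function identities). Your feasibility checks and endpoint gluing are in fact slightly more careful than the paper's, whose Regime I and $\delta>2\sqrt{1-|E|^2}$ Bloch vectors contain evident typos that your version silently corrects.
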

The proof relies on elementary geometry. 
The curve given by the rhs.~of Eq.~\eqref{eq:lowerboundx} matches the numerically observed Dobrushin curve shown in Fig.~\ref{fig:dobrushincurvetwo}; this shows that the pairs of states considered above are indeed optimal. 
\begin{proof}
We consider each regime separately.
\begin{description}
\item[Regime I:] 
Consider $\delta\in [0,1-|E|]$. In this case, the choice
\begin{align}
\vec{r}_1&=(0,0,-1+|E|)\ ,\\
\vec{r}_2&=(0,0,-1)
\end{align}
is optimal and leads to $\|\vec{r}'_1-\vec{r}'_2\|_1=\|\vec{r}_1-\vec{r}_2\|_1$ for the output Bloch vectors $\vec{r}'_1$ and $\vec{r}'_2$. 

\item[Regime II:]

Now consider $\delta\in [1-|E|,\sqrt{2(1-|E|)}]$ see  Fig.~\ref{fig:dobrushinoptimalcodestates}.
The initial Bloch vectors are 
\begin{align}
\vec{r}_1&=(x,0,-|E|)\ ,\\
\vec{r}_2&=(0,0,-1)\ ,
\end{align}
where $x^2+(1-|E|)^2=\delta^2$, i.e.,
$x=\sqrt{\delta^2-(1-|E|)^2}$.
The Bloch vectors after application of the channel are 
\begin{align}
\vec{r}'_1&=(ax,0,-|E|)\ ,\\
\vec{r}'_2&=(0,0,-1)\ ,
\end{align}
such that 
\begin{align}
\|\vec{r}'_1-\vec{r}'_2\|_1&=\sqrt{a^2x^2+(1-|E|)^2}\\
&=\sqrt{a^2(\delta^2-(1-|E|)^2)+(1-|E|)^2}\ ,
\end{align}
for $\delta\in [1-|E|,\sqrt{(1-|E|)^2+1}]$.

\item[Regime III:]
Let us now look at $\delta\in [\sqrt{2(1-|E|)},2\sqrt{1-|E|^2}]$. 
Consider Fig.~\ref{fig:dobrushinoptimalcodestates}.
Assume that $\rho_1$ has Bloch vector given by
\begin{align}
\vec{r}_1&=(\cos\varphi,0,-\sin\varphi)\ ,
\end{align}
where $\sin\varphi=|E|$, and that $\rho_2$ has Bloch vector~$\vec{r}_2$ specified by an angle~$\theta$ as in Fig.~\ref{fig:computingblochdephasingtwo}.  The figure shows that
$\delta=2\sin \theta$
and $\alpha=2\theta+\varphi-\pi/2$. 
Thus, the Bloch vector~$\vec{r}_2$ is 
\begin{align}
\vec{r}_2&=(-\sin \alpha,0,-\cos\alpha)\\
&=(-\sin (2\theta+\varphi-\pi/2), 0,-\cos(2\theta+\varphi-\pi/2))\ .
\end{align}
These get  mapped to
\begin{align}
\vec{r}_1'&=(a\cos\varphi,0,-\sin\varphi)\ ,\\
\vec{r}_2'&=(-a\sin (2\theta+\varphi-\pi/2), 0,-\cos(2\theta+\varphi-\pi/2))\ ,
\end{align}
such that
\begin{align}
\|\vec{r}_1'-\vec{r}_2'\|_1&=\bigl(a^2\left(\cos\varphi+\sin(2\theta+\varphi-\pi/2)\right)^2\\
&\qquad +\left(\sin\varphi-\cos(2\theta+\varphi-\pi/2)\right)^2\bigr)^{1/2}\ ,
\end{align}
where $\theta=\arcsin(\delta/2)$ and $\varphi=\arcsin |E|$. 
\end{description}

\begin{figure}[t]
  \subfloat[The Dobrushin curve for the dephasing channel $\Phi_{\pi/9}$ with rotated principal axes. The algorithm required an average of $898$ elements (worst case: $2056$ elements) in the partition $\cP$ to reach the desired precision.][The Dobrushin curve for the dephasing channel\\ $\Phi_{\pi/9}$ with rotated principal axes. The algorithm\\ required an average of $898$ elements (worst case: $2056$\\ elements) in the partition $\cP$ to reach the desired precision. \label{fig:dobrushinrotated1}]
{\includegraphics[width=0.5\linewidth]{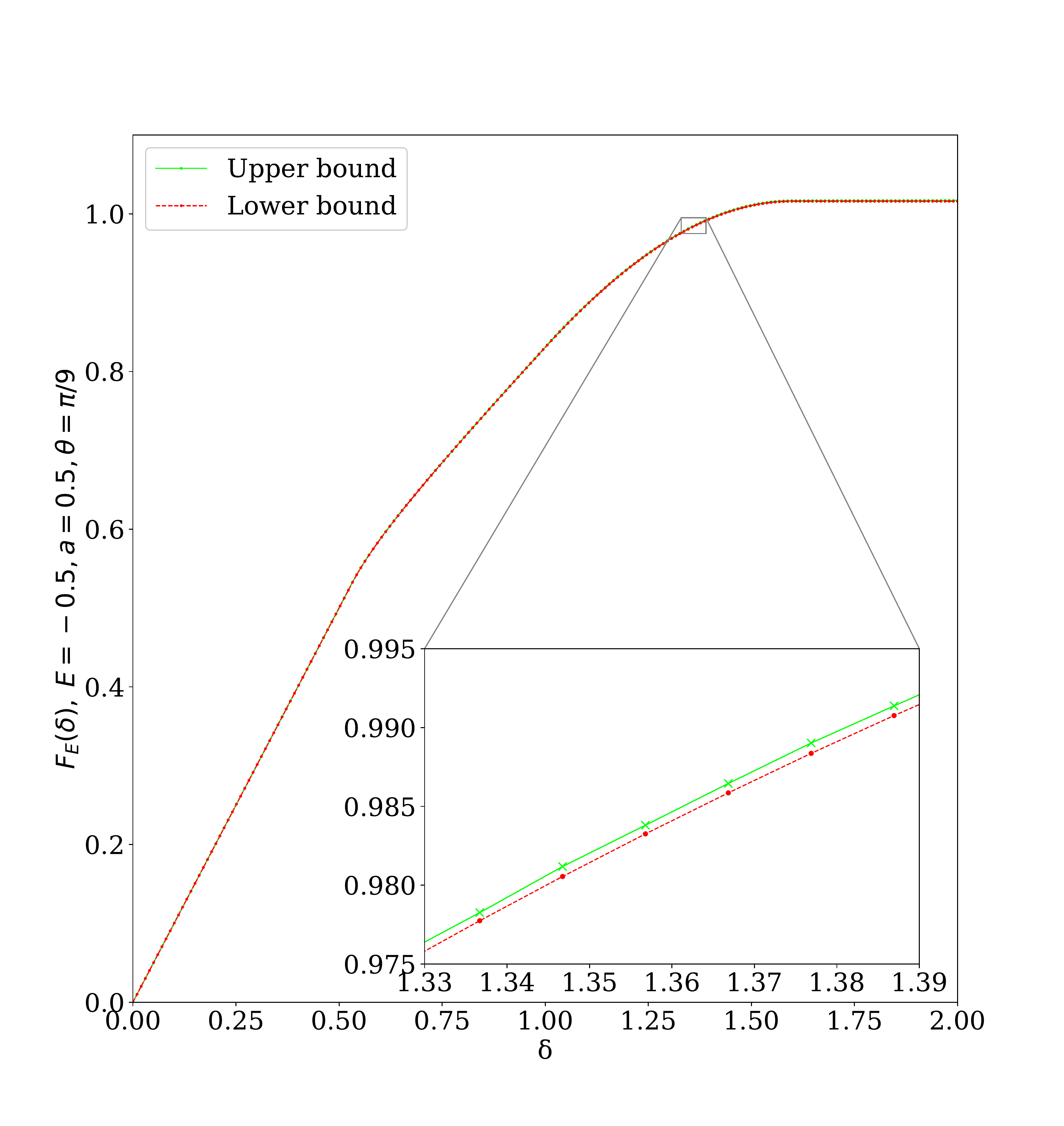}}
\subfloat[The Dobrushin curve for the dephasing channel $\Phi_{\pi/4}$ with rotated principal axes. The algorithm required an average of $538$ elements (worst case: $730$ elements) in the partition $\cP$ to reach the desired precision.][The Dobrushin curve for the dephasing channel\\ $\Phi_{\pi/4}$ with rotated principal axes. The algorithm required an average of $598$ elements (worst case: $730$ elements) in the partition $\cP$ to reach the desired precision. \label{fig:dobrushinrotated2}]
{\includegraphics[width=0.5\linewidth]{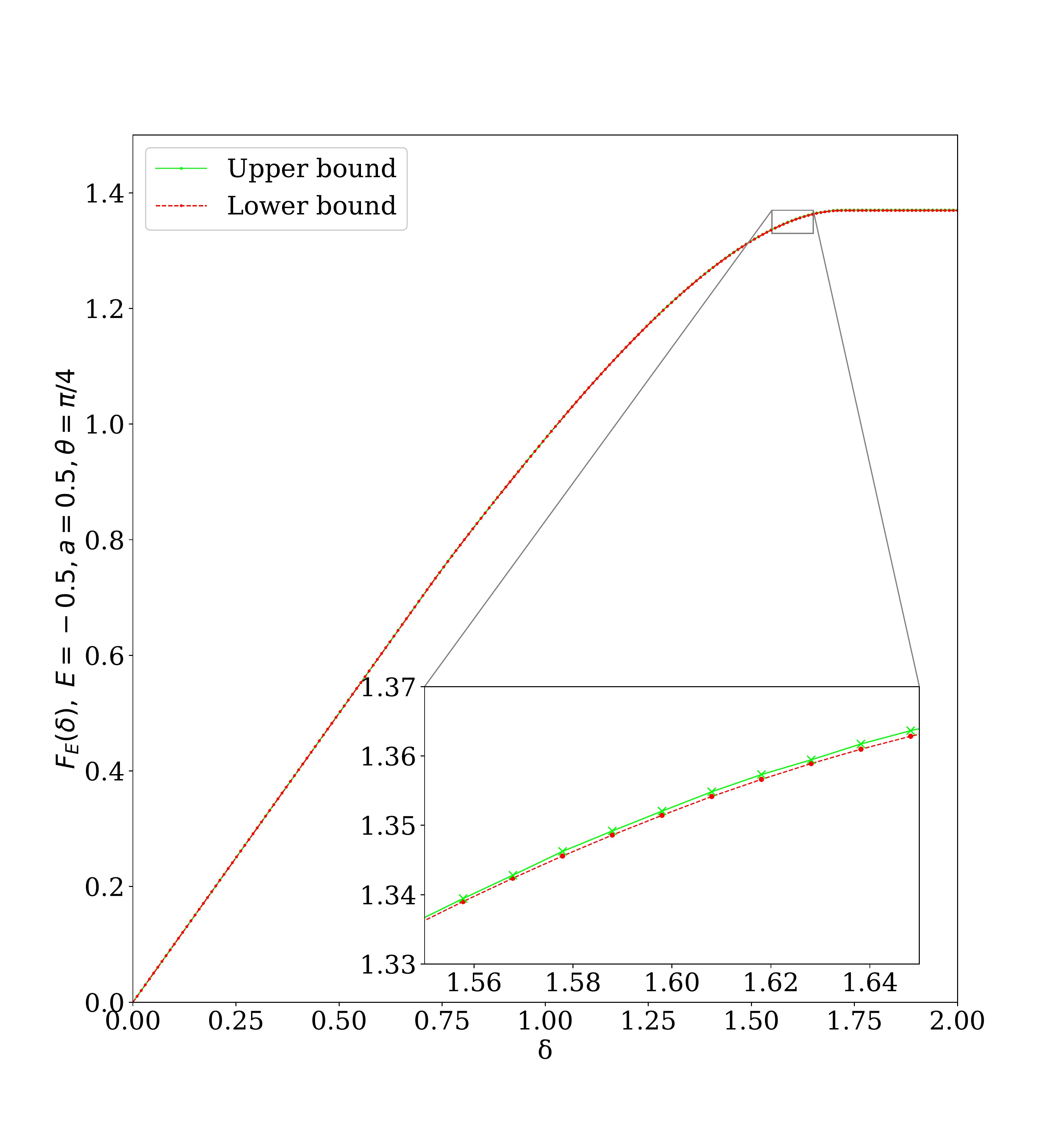}}
\caption{The Dobrushin curves of the  dephasing channels $\Phi_{\pi/9}$ and $\Phi_{\pi/4}$
with rotated principal axes for two different values of the rotation angle $\theta$, for $a = \frac{1}{2}$ and $E = -0.5$. For $200$ values of $\delta \in [0,2]$, the algorithm was run with a desired precision of $\epsilon = 10^{-3}$.\label{fig:dobrushindephasingrotated}}
\end{figure}

Finally, for  $\delta>2\sqrt{1-|E|^2}$ the  two inputs
\begin{align}
\vec{r}_1&=(2\sqrt{1-|E|^2},0,-1+|E|)\ ,\\
\vec{r}_2&=(-2\sqrt{1-|E|^2},0,-1+|E|)
\end{align}
are optimal and lead to $\|\vec{r}'_1-\vec{r}'_2\|_1=a\cdot\|\vec{r}_1-\vec{r}_2\|_1$. 
\end{proof}

\subsubsection{Dobrushin-curves of generic qubit channels\label{sec:genericqubitdobrushin}}
In Fig.~\ref{fig:dobrushindephasingrotated}, we consider 
more general channels which no do not obey the symmetry condition~\eqref{eq:invariancephi}.

Consider a dephasing channel whose principal axes are not the $\{\sigma_1, \sigma_2, \sigma_3\}$-axes. To achieve this, we rotate the Kraus operators of the channel around the $\sigma_1$-axis by an angle $\theta_1\in[0,2\pi]$. This means that we conjugate by the unitary $e^{\frac{i}{2}\theta_1 \sigma_1}$, obtaining the channel
\begin{equation}
  \Phi_{\theta_1}(\rho) = e^{-\frac{i}{2}\theta_1\sigma_1} \Phi\left(e^{\frac{i}{2}\theta_1\sigma_1} \rho e^{-\frac{i}{2}\theta_1\sigma_1}\right) e^{\frac{i}{2}\theta_1\sigma_1} \ ,
\end{equation}
where $\Phi$ is the dephasing channel (see Eq.~\eqref{eq:dephasingdef}).
The Dobrushin curve of such a channel for a fixed $\theta_1$ can be calculated by our algorithm and results in the curve depicted in Fig.~\ref{fig:dobrushindephasingrotated}.

\subsubsection*{Program code}
Python program code for the algorithm constructed here is available together with the TeX-source code on the ArXiv. 

\subsubsection*{Acknowledgments}
SH and RK acknowledge support by  DFG project no.~K05430/1-1. 
RK is supported by the Technische Universit\"{a}t M\"{u}nchen - Institute of Advanced Study, funded by the German Excellence Initiative and the European Union Seventh Framework Programme under grant agreement no.~291763.
MT acknowledges an Australian Research Council Discovery Early Career Researcher Award, project no.~DE160100821. 
The authors thank Volkher Scholz and David Reeb for discussions on Dobrushin curves.


\newpage
\appendix

\section{Pseudocode for jointly constrained semidefinite bilinear programs}
\label{app:pseudo}

In this appendix we give pseudocode for the full algorithm.

  \begin{figure}[h!]
  \begin{algorithmic}[1]
  \Statex
  \Function{ComputeVectorRep}{$Q,X,Y,\{\eta_j\}_{j=1}^{p^2}, \{\xi_j\}_{j=1}^{q^2}$}\\ \hrulefill\\
  \textbf{Input:} Operators $Q \in \selfadjointops(\mathbb{C}^p \otimes \mathbb{C}^q), X \in \selfadjointops(\mathbb{C}^p), Y \in \selfadjointops(\mathbb{C}^q)$\\
  orthonormal bases $\{\eta_j\}_{j=1}^{p^2}$
  of $\selfadjointops(\mathbb{C}^p)$ and
  $\{\xi_k\}_{k=1}^{q^2}$ of   $\selfadjointops(\mathbb{C}^q)$\\
  \textbf{Output:} Vectors $(x,y)\in\mathbb{R}^{p^2}\times \mathbb{R}^{q^2}$ such that $(x,y)=\Gamma(X,Y)$\\ \hrulefill \\
   \State{Find $U_{j,k} = \tr((\eta_j \otimes \xi_k) Q)$ \textbf{ for } $j = 1, \dots, p^2, k = 1, \dots, q^2$}
    \State{Find singular value decomposition $U = S \Delta T$}
            \For{ each $j=1,\ldots,p^2$}
          \State{Set $x_j=\sum_{k=1}^{p^2} S_{k,j}\tr(X \eta_k)$}
                   \EndFor
           \For{ each $k=1,\ldots,q^2$}
          \State{Set $y_k=\sum_{\ell=1}^{q^2} T_{\ell,k}\tr(Y \xi_\ell)$}
           \EndFor
     \State{\Return $(x,y)$}
   \EndFunction
 \end{algorithmic}

  \begin{algorithmic}[1]
  \Statex
  \Function{ComputeOperator}{$Q,x,y,\{\eta_j\}_{j=1}^{p^2}, \{\xi_j\}_{j=1}^{q^2}$}\\ \hrulefill\\
  \textbf{Input:} Operators $Q \in \selfadjointops(\mathbb{C}^p \otimes \mathbb{C}^q), (x,y)\in\mathbb{R}^{p^2}\times\mathbb{R}^{q^2}$\\
  orthonormal bases $\{\eta_j\}_{j=1}^{p^2}$
  of $\selfadjointops(\mathbb{C}^p)$ and
  $\{\xi_k\}_{k=1}^{q^2}$ of   $\selfadjointops(\mathbb{C}^q)$\\
  \textbf{Output:} Operators $(X,Y)\in\selfadjointops(\mathbb{C}^p)\times\selfadjointops(\mathbb{C}^q)$ such that $(x,y)=\Gamma(X,Y)$\\ \hrulefill \\
   \State{Find $U_{j,k} = \tr((\eta_j \otimes \xi_k) Q)$ \textbf{ for } $j = 1, \dots, p^2, k = 1, \dots, q^2$}
    \State{Find singular value decomposition $U = S \Delta T$}
                     \State{Set $X=\sum_{j=1}^{p^2}\sum_{k=1}^{p^2} S_{k,j}x_k \eta_j$}
                           \State{Set $Y=\sum_{k=1}^{q^2}\sum_{\ell=1}^{q^2} T_{\ell,k}y_\ell \xi_k$}
     \State{\Return $(X,Y)$}
   \EndFunction
 \end{algorithmic}

 \caption{The algorithm \textsc{ComputeVectorRep}
computes the image $\Gamma(X,Y)$
of two operators $(X,Y)$, cf. Lemma~\ref{lem:vectorproblem}.
Conversely, the procedure \textsc{ComputeOperator} computes $\Gamma^{-1}(x,y)$, i.e, converts a pair of vectors into a pair of operators.
We note that the required singular value decomposition can be computed once and stored. It can then be reused whenever the function is invoked. \label{alg:computevectorrep}}
\end{figure}
  \begin{figure}[h]
    \begin{algorithmic}[1]
      \Function{BranchHyperrectangle}{$\Omega, (v,w)$}\\ \hrulefill\\
      \textbf{Input:} Hyperrectangle $\Omega = \Omega(\ell,L,m,M) \subset\mathbb{R}^{p^2} \times \mathbb{R}^{q^2}$, branching point $(v,w) \in \mathbb{R}^{p^2} \times \mathbb{R}^{q^2}$ \\
      \textbf{Output:} Hyperrectangles $\Omega^{(1)}, \Omega^{(2)}, \Omega^{(3)}, \Omega^{(4)} \subset \mathbb{R}^{p^2} \times \mathbb{R}^{q^2}$ such that $\cup_{i=1}^4 \Omega^{i} = \Omega$ \\ \hrulefill \\
      \State{Pick index $I \in \{1, \dots, K\}$ which produces the largest difference between the two sides in the inequality}
      \State{$\max \{m_i v_i + \ell_i w_i - \ell_i m_i, M_i v_i + L_i w_i - L_i M_i\} < v_i w_i$}
      \For{$i \gets 1, \dots, K$}
      \If{$i \neq I$}
      \For{$j \gets 1, 2, 3, 4$}
      \State{$(\ell_i^{(j)}, L_i^{(j)}, m_i^{(j)}, M_i^{(j)}) \gets (\ell_i, L_i, m_i, M_i)$}\EndFor \EndIf
      \State{$(\ell_I^{(1)}, L_I^{(1)}, m_I^{(1)}, M_I^{(1)}) \gets (\ell_I, v_I, m_I, w_I)$} 
      \State{$(\ell_I^{(2)}, L_I^{(2)}, m_I^{(2)}, M_I^{(2)}) \gets (v_I, L_I, m_I, w_I)$}
      \State{$(\ell_I^{(3)}, L_I^{(3)}, m_I^{(3)}, M_I^{(3)}) \gets (v_I, L_I, w_I, M_I)$}
      \State{$(\ell_I^{(4)}, L_I^{(4)}, m_I^{(4)}, M_I^{(4)}) \gets (\ell_I, v_I, w_I, M_I)$}
      \EndFor
      \For{$i \gets K+1, \dots, p^2$} \For{$j \gets 1,2,3,4$}
      \State{$(\ell_i^{(j)}, L_i^{(j)}) \gets (\ell_i, L_i)$} \EndFor\EndFor
      \For{$i \gets K+1, \dots, q^2$} \For{$j \gets 1,2,3,4$}
      \State{$(m_i^{(j)}, M_i^{(j)}) \gets (m_i, M_i)$} \EndFor\EndFor
      \State{\Return $\Omega(\ell^{(1)}, L^{(1)}, m^{(1)}, M^{(1)}),  \Omega(\ell^{(2)}, L^{(2)}, m^{(2)}, M^{(2)}),  \Omega(\ell^{(3)}, L^{(3)}, m^{(3)}, M^{(3)}),  \Omega(\ell^{(4)}, L^{(4)}, m^{(4)}, M^{(4)})$}
      \EndFunction
    \end{algorithmic}
  \caption{This algorithm splits a hyperrectangle~$\Omega$ into four subrectangles, cf.~Fig.~\ref{fig:hyperrectanglesplit}.
  The choice of coordinates $(x_I,y_I)$ in the first step is restricted to
  $I\leq K$ (where $K$ is the number of non-zero singular values of $U$ as in Eq.~\eqref{eq:svd}). 
  The hyperrectangles do not need to be refined along the remaining coordinates because the objective function is linear in these. In particular, the dependence of the convex envelope~$\Vex_\Omega F$ on the parameters $\{v_j\}_{j=K+1}^{p^2}$ and $\{w_j\}_{j=K+1}^{q^2}$ coincides with that of the function~$F$ irrespective of which hyperrectangle~$\Omega$ is considered.  \label{alg:branchhyperrectangle}}
  \end{figure}

\begin{figure}
  \begin{algorithmic}[1]
  \Statex
  \Function{Boundingrectangle}{$Q,\cS,\{\eta_j\}_{j=1}^{p^2}, \{\xi_k\}_{k=1}^{q^2}$}\\ \hrulefill\\
  \textbf{Input:} Operators $Q \in \selfadjointops(\mathbb{C}^p \otimes \mathbb{C}^q)$, set $\cS \subset \selfadjointops(\mathbb{C}^p) \times \selfadjointops(\mathbb{C}^q)$ defined by SDP constraints\\
  orthonormal bases $\{\eta_j\}_{j=1}^{p^2}$
  of $\selfadjointops(\mathbb{C}^p)$ and $\{\xi_k\}_{k=1}^{q^2}$
  of $\selfadjointops(\mathbb{C}^q)$ \\
  \textbf{Output:} $\ell^*,L^*\in\mathbb{R}^{p^2}$ and $m^*,M^*\in\mathbb{R}^{q^2}$  such that $\Gamma(\cS)\subset \Omega(\ell^*,L^*,m^*,M^*)$ and $\Omega$ is minimal as discussed in Lemma~\ref{lem:boundinghyperrectangle}\\ \hrulefill \\

   \State{Find $U_{j,k} = \tr((\eta_j \otimes \xi_k) Q)$ \textbf{ for } $j = 1, \dots, p^2, k = 1, \dots, q^2$}
    \State{Find singular value decomposition $U = S \Delta T$}

      \For{ each $j=1,\ldots,p^2$}
          \State{Use the SDP solver to compute $\ell_j^*=\inf_{(X,Y)\in\cS} \sum_{k=1}^{p^2} S_{k,j}\tr(X \eta_k)$}
                    \State{Use the SDP solver to compute $L_j^*=\sup_{(X,Y)\in\cS} \sum_{k=1}^{p^2} S_{k,j}\tr(X \eta_k)$}
      \EndFor
           \For{ each $k=1,\ldots,q^2$}
          \State{Use the SDP solver to compute $m_k^*=\inf_{(X,Y)\in\cS} \sum_{\ell=1}^{q^2} T_{\ell,k}\tr(Y \xi_\ell)$}
                    \State{Use the SDP solver to compute $M_k^*=\sup_{(X,Y)\in\cS} \sum_{\ell=1}^{q^2} T_{\ell,k}\tr(Y \xi_\ell)$}
      \EndFor
\State{\Return $(\ell^*,L^*,m^*,M^*)$}
   \EndFunction
 \end{algorithmic}
 \caption{The procedure \textsc{BoundingRectangle} which
finds a minimal hyperrectangle~$\Omega$ containing the vectors~$\Gamma(\cS)$. It invokes an SDP solver. Again, the singular value decomposition of $Q$ can be precomputed. \label{alg:boundingrec}}
\end{figure}

\begin{figure}
    \begin{algorithmic}[1]
      \Statex
    \Function{ComputeBoundsSDP}{$f,\Omega,\cS,\{\eta_j\}_{j=1}^{p^2},\{\xi_k\}_{k=1}^{q^2}$}\\ \hrulefill\\
    \textbf{Input:}   a function $f: \mathbb{R}^{p^2}\times\mathbb{R}^{q^2} \rightarrow \mathbb{R}$ of the form~\eqref{eq:fvectorformexplicit}\\
     hyperrectangle $\Omega \subset\mathbb{R}^{p^2} \times \mathbb{R}^{q^2}$ and\\
     set $\cS \subset \selfadjointops(\mathbb{C}^p) \times \selfadjointops(\mathbb{C}^q)$ defined by SDP constraints\\
  orthonormal bases $\{\eta_j\}_{j=1}^{p^2}$
  of $\selfadjointops(\mathbb{C}^p)$ and $\{\xi_k\}_{k=1}^{q^2}$
  of $\selfadjointops(\mathbb{C}^q)$ \\
    \textbf{Output:} $(\underline{\alpha}(\Omega),\overline{\alpha}(\Omega))\in\mathbb{R}^2$ and $z(\Omega)=(x^*, y^*) \in \Gamma(\cS)\cap \Omega$
    such that\\ $\underline{\alpha}(\Omega)\leq \min_{(x,y)\in\Gamma(\cS)\cap \Omega}f(x,y)\leq \overline{\alpha}(\Omega)=f(x^*,y^*)$ \\ \hrulefill

    \State{Define $\hat{\ell}_j^0=\ell_j$ and $\hat{\ell}_j^1=L_j$ for $j=1,\ldots,p^2$}
     \State{Define $\hat{m}_k^0=m_k$ and $\hat{m}_k^1=M_k$ for $k=1,\ldots,q^2$}
     \State{Define the operators and scalars
     \begin{align*}
     G^b_j=\sigma_j \hat{m}_j^{b}\sum_{k=1}^{p^2} S_{k,j} \eta_k\qquad H^b_j=\sigma_j \hat{\ell}_j^{b}\sum_{\ell=1}^{q^2} T_{j,\ell} \xi_\ell\qquad s^b_j=\sigma_j \hat{\ell}_j^b \hat{m}_j^b\end{align*}
      for all $b\in \{0,1\}$ and $j=1,\ldots,K$.}
     \State{Define the function $G:\cB(\mathbb{C}^p)\times \cB(\mathbb{C}^q)\times \mathbb{R}^K\rightarrow\mathbb{R}$ by
     \begin{align*}
     G(X,Y,r)=\sum_{j=1}^K r_j + \sum_{j=1}^{p^2} a_j
\sum_{k=1}^{p^2} S_{k,j}\tr(X\eta_k)
+ \sum_{\ell=1}^{q^2} b_j T_{\ell,j}\tr(Y\xi_\ell)
\end{align*}}
        \State{Invoke the SDP solver to compute
\begin{align*}
(X^*,Y^*,r^*)=\argmin_{(X,Y,r)}
G(X,Y,r)
\end{align*}\\
subject to the constraints
\begin{align*}
(X,Y)&\in\cS\\
   \ell_j&\leq  \tr(X \sum_{k=1}^{p^2} S_{k,j}\eta_k)\leq L_j\qquad\textrm{ for }\qquad j=1,\ldots,p^2\qquad\textrm{and }\\
  m_k&\leq \tr(Y\sum_{\ell=1}^{q^2} T_{k,\ell}\xi_\ell)\leq M_k\qquad\textrm{ for }\qquad k=1,\ldots,q^2\\
    \tr(XG^b_j)+\tr(YH^b_j)-r_j&\leq s^b_j\qquad\textrm{ for all }b\in \{0,1\}\qquad\textrm{ and }\qquad j=1,\ldots,K\ .
\end{align*}
}
      \State{Set $\underline{\alpha}(\Omega)=G(X^*,Y^*,r^*)$}
        \State{Set $\overline{\alpha}(\Omega)=f(\Gamma(X^*,Y^*))$}
    \State{\Return{$\underline{\alpha}(\Omega)$, $\overline{\alpha}(\Omega)$ and $(x(X^*),y(Y^*))$ }}
      \EndFunction
    \end{algorithmic}
    \caption{The subroutine \textsc{ComputeBoundsSDP} takes as input a set $\Omega\subset\mathbb{R}^{p^2}\times\mathbb{R}^{q^2}$ and a  function~$f:\Omega\rightarrow\mathbb{R}$ as in Lemma~\ref{lem:vectorproblem}. It further accepts a set $\cS\subset\cB(\mathbb{C}^p)\times\cB(\mathbb{C}^q)$ defined by SDP constraints. It computes a pair $(\underline{\alpha}(\Omega),\overline{\alpha}(\Omega))$ of lower and upper bounds on $\inf_{(x,y)\in \Gamma(\cS)\cap \Omega}f(x,y)$, where $\Gamma$ is defined as in Lemma~\ref{lem:vectorproblem}.
   \label{alg:computeboundssdp}
  }
  \end{figure}

 \begin{figure}
  \begin{algorithm}[H]
    \caption{Jointly constrained semidefinite bilinear programming}
    \begin{algorithmic}[1]
      \Statex
      \textbf{Input:}
      $\cS \subset \selfadjointops(\mathbb{C}^p) \times \selfadjointops(\mathbb{C}^q)$ specified by SDP constraints.
      $Q \in \selfadjointops(\mathbb{C}^p \otimes \mathbb{C}^q), A \in \selfadjointops(\mathbb{C}^p), B \in \selfadjointops(\mathbb{C}^q)$
      determining  a function $F:\selfadjointops(\mathbb{C}^p)\times\selfadjointops(\mathbb{C}^q)\rightarrow\mathbb{R}$ as in Eq.~\eqref{eq:sdpproblem}\\
        desired precision $\epsilon > 0$.
      \\
      \textbf{Output:} $(X^*,Y^*) \in \cS$ and $\overline{\alpha}$ such that Eq.~\eqref{eq:outputalgnew} holds \\ \hrulefill
      \State{Fix bases $\{\eta_j\}_{j=1}^{p^2}$
  of $\selfadjointops(\mathbb{C}^p)$ and $\{\xi_k\}_{k=1}^{q^2}$
  of $\selfadjointops(\mathbb{C}^q)$}
      \State{Set $(\ell^*,L^*,m^*,M^*)=\Call{BoundingRectangle}{Q,\cS,\{\eta_j\}_{j=1}^{p^2}, \{\xi_k\}_{k=1}^{q^2}}$}
      \State{Define $D$ as the hyperrectangle $D=\Omega(\ell^*,L^*,m^*,M^*)$}
         \State{Set $(\underline{\alpha}(D),\overline{\alpha}(D),z(D))=
         \Call{ComputeBoundsSDP}{f,D,\cS,\{\eta_j\}_{j=1}^{p^2},\{\xi_k\}_{k=1}^{q^2}}$}
       \State{Set $\cP=\{D\}$, $\underline{\alpha}(\cP)=\underline{\alpha}(D)$ and $\overline{\alpha}(\cP)=\overline{\alpha}(D)$}
      \While{$\overline{\alpha}(\cP)-\overline{\alpha}(\cP)>\epsilon$}
        \State{Find (any) hyperrectangle~$\Omega\in\cP$ such that $\underline{\alpha}(\cP)=\underline{\alpha}(\Omega)$}
        \State{Set $(x,y)=z(\Omega)$}
        \State{$(\Omega^{(1)}, \Omega^{(2)}, \Omega^{(3)}, \Omega^{(4)}) =$ \Call{BranchHyperrectangle}{$\Omega,(x,y)$}}
         \For{$j \gets 1,2,3,4$}
             \State{Set $(\underline{\alpha}(\Omega^{(j)}),\overline{\alpha}(\Omega^{(j)}),z(\Omega^{(j)}))=
         \Call{ComputeBoundsSDP}{f,\cS,(\Omega^{(j)})}$}
         \EndFor
         \State{Update $\cP\gets (\cP\backslash \{\Omega\})\cup \{ \Omega^{(1)}, \Omega^{(2)}, \Omega^{(3)}, \Omega^{(4)}\}$}
         \State{Compute $\underline{\alpha}(P)=\min_{\Omega\in\cP}\underline{\alpha}(\Omega)$ and  $\overline{\alpha}(P)=\min_{\Omega\in\cP}\overline{\alpha}(\Omega)$}
      \EndWhile
      \State{Find (any) hyperrectangle~$\Omega\in\cP$ such that
      $\overline{\alpha}(\Omega)=\overline{\alpha}(\cP)$}
                \State{Terminate and \Return $\Call{ComputeOperator}{z(\Omega)}$, $\overline{\alpha}(\cP)$}
      \end{algorithmic}
  \end{algorithm}
  \caption{The algorithm for jointly constrained biconvex programming.
    The key modification compared to the biconvex programming algorithm from Section~\ref{sec:biconvex} is the use of the  subroutine \textsc{ComputeBoundsSDP}, which   uses an SDP solver to establish bounds on the objective function.
   \label{alg:branch-and-boundsdp}}
  \end{figure}

\end{document}